\newcommand{\removelatexerror}{\let\@latex@error\@gobble}
\newtheorem{theorem}{Theorem}
\newtheorem{proof}{Proof}
\def\BibTeX{{\rm B\kern-.05em{\sc i\kern-.025em b}\kern-.08em
    T\kern-.1667em\lower.7ex\hbox{E}\kern-.125emX}}
\begin{document}

\title{OTAS: An Elastic Transformer Serving System via Token Adaptation\\
% {\footnotesize \textsuperscript{*}Note: Sub-titles are not captured in Xplore and
% should not be used}
\thanks{
The work described in this paper was substantially supported by two grant from the Research Grants Council of the Hong Kong Special Administrative Region, China (Project No. PolyU15222621, PolyU15225023), the NSF of China (No. 62172375, No. 62302184), Provincal Key Research and Development Program of Hubei (No. 2023BAB065),  the Key-Area Research and Development Program of Guangdong Province (No. 2021B0101400003), Hong Kong RGC Research Impact Fund (No. R5060-19, No. R5034-18), Areas of Excellence Scheme (AoE/E-601/22-R), and General Research Fund (No. 152203/20E, 152244/21E, 152169/22E, 152228/23E).}
}

\author{\IEEEauthorblockN{Jinyu Chen$^{1}$, Wenchao Xu*$^{1}$, Zicong Hong*$^{1}$, Song Guo$^{2}$, Haozhao Wang$^{3}$, Jie Zhang$^{1}$, and Deze Zeng$^{4}$}
\IEEEauthorblockA{$^1$Department of Computing, The Hong Kong Polytechnic University\\
$^2$Department of Computer Science and Engineering, The Hong Kong University of Science and Technology\\
$^3$School of Computer Science and Technology, Huazhong University of Science and Technology\\
$^4$School of Computer Science, China University of Geosciences\\
jinyu.chen@connect.polyu.hk, wenchao.xu@polyu.edu.hk, zicong.hong@connect.polyu.hk\\ songguo@cse.ust.hk, hz\_wang@hust.edu.cn, jie-comp.zhang@polyu.edu.hk, deze@cug.edu.cn\\
*Corresponding author: Wenchao Xu and Zicong Hong}
}

% \author{\IEEEauthorblockN{1\textsuperscript{st} Given Name Surname}
% \IEEEauthorblockA{\textit{dept. name of organization (of Aff.)} \\
% \textit{name of organization (of Aff.)}\\
% City, Country \\
% email address or ORCID}
% \and
% \IEEEauthorblockN{2\textsuperscript{nd} Given Name Surname}
% \IEEEauthorblockA{\textit{dept. name of organization (of Aff.)} \\
% \textit{name of organization (of Aff.)}\\
% City, Country \\
% email address or ORCID}
% \and
% \IEEEauthorblockN{3\textsuperscript{rd} Given Name Surname}
% \IEEEauthorblockA{\textit{dept. name of organization (of Aff.)} \\
% \textit{name of organization (of Aff.)}\\
% City, Country \\
% email address or ORCID}
% \and
% \IEEEauthorblockN{4\textsuperscript{th} Given Name Surname}
% \IEEEauthorblockA{\textit{dept. name of organization (of Aff.)} \\
% \textit{name of organization (of Aff.)}\\
% City, Country \\
% email address or ORCID}
% \and
% \IEEEauthorblockN{5\textsuperscript{th} Given Name Surname}
% \IEEEauthorblockA{\textit{dept. name of organization (of Aff.)} \\
% \textit{name of organization (of Aff.)}\\
% City, Country \\
% email address or ORCID}
% \and
% \IEEEauthorblockN{6\textsuperscript{th} Given Name Surname}
% \IEEEauthorblockA{\textit{dept. name of organization (of Aff.)} \\
% \textit{name of organization (of Aff.)}\\
% City, Country \\
% email address or ORCID}
% }

\maketitle

\begin{abstract}
Transformer model empowered architectures have become a pillar of cloud services that keeps reshaping our society. However, the dynamic query loads and heterogeneous user requirements severely challenge current transformer serving systems, which rely on pre-training multiple variants of a foundation model, i.e., with different sizes, to accommodate varying service demands. Unfortunately, such a mechanism is unsuitable for large transformer models due to the additional training costs and excessive I/O delay. In this paper, we introduce \textsc{OTAS}, the first elastic serving system specially tailored for transformer models by exploring lightweight token management. We develop a novel idea called \emph{token adaptation} that adds prompting tokens to improve accuracy and removes redundant tokens to accelerate inference. To cope with fluctuating query loads and diverse user requests, we enhance \textsc{OTAS} with application-aware selective batching and online token adaptation. \textsc{OTAS} first batches incoming queries with similar service-level objectives to improve the ingress throughput. Then, to strike a tradeoff between the overhead of token increment and the potentials for accuracy improvement, \textsc{OTAS} adaptively adjusts the token execution strategy by solving an optimization problem. We implement and evaluate a prototype of \textsc{OTAS} with multiple datasets, which show that \textsc{OTAS} improves the system utility by at least 18.2\%.
\end{abstract}

\begin{IEEEkeywords}
model serving, cloud computing, transformer, elastic computing
\end{IEEEkeywords}

\section{Introduction}
It is of vital importance for the cloud to effectively serve the machine learning models for artificial intelligence (AI) applications, which can substantially affect the 
% Machine learning model serving in cloud computing is of great importance for artificial intelligence (AI) applications and the serving quality has a significant influence on the 
quality of user experience and the accompanied economic profits. For example, Facebook has 1.82 billion daily active users and issues tens of trillions of model inference queries per day, which necessitates fundamental re-designs for facilitating the model optimization and the serving efficiency~\cite{hpc-ai}.
% in both the model optimization and serving system platform~\cite{hpc-ai}.
% which necessitates crucial modifications in both the model optimization and serving system design~\cite{hpc-ai}. 

Recent advances in self-supervised pre-training techniques have boosted the development of large transformer models while imposing a substantial burden on the model serving. These pre-trained transformer models have dramatically revolutionized our lives and brought remarkable potential to our society. For example, large pre-trained models like GPT-3 have spawned a host of applications, such as Copilot~\cite{copilot} and ChatGPT~\cite{ChatGPT}. In particular, ChatGPT has more than 100 million active users and received 176 million visits in April 2023~\cite{Similarweb}. 
Despite the explosion of such diverse applications, the resource-intensive nature of transformer models,  coupled with dynamic query loads and heterogeneous user requirements have exacerbated the challenges associated with transformer serving, making it extremely challenging to accommodate various service demands. In this context, the implementation of an elastic serving system that can adapt the serving process for improving service quality emerges as a promising solution. 
% should adjust its serving process to improve service quality under constrained resources. despite resource constraints

% Recently, a number of works designed several optimization schemes to improve the performance of the serving system. (1) \emph{Batching}~\cite{ali2020batch,cui2022dvabatch,280922}. Grouping the queries together and executing them in a batch can make use of computational capacity of the hardware and significantly improve the throughput. 
% (2) \emph{Model scaling}~\cite{10.14778/3570690.3570692,wang2023tabi,10.1145/3575693.3575698}. To deal with fluctuating query loads, a line of serving systems deploy a hierarchy of models and select a suitable model dynamically.  (3) \emph{Resource management and scheduling}~\cite{fasttransformer,10046087,li2023alpaserve}. Another class of works focused on optimizing the model parallelism, model placement and resource management to improve the device usage. However, batching cannot inherently improve the maximum serving ability of the system. Training multiple large transformer models requires numerous monetary costs, and switching the models leads to large IO costs. Dynamic resource provisioning requires complex hardware management and long startup latency.

    \begin{figure}[t]
    \centering
	\subfloat[\emph{Model Adaptation.} The service provider prepares multiple versions of transformers, e.g., tiny (T), small (S), base (B) and large (L) models, and switches them during runtime.]
        {\includegraphics[width = 0.23\textwidth]{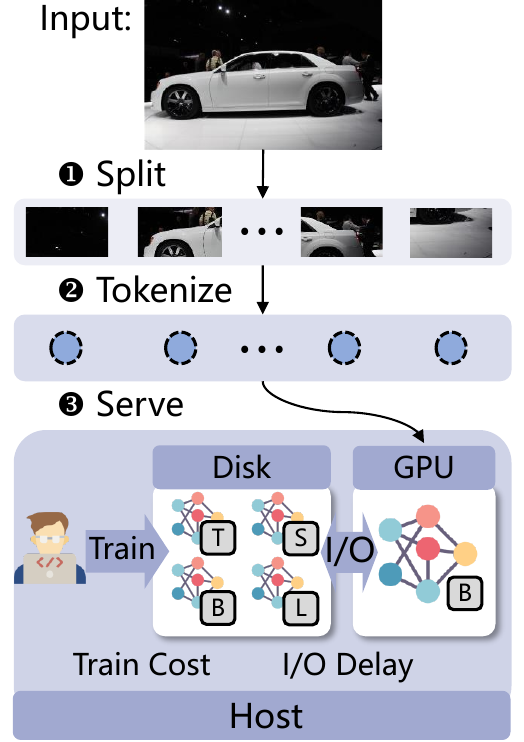}\label{fig:model_introduction}}
	\hfill
	\subfloat[\emph{Token Adaptation.} We realize elastic transformer serving through lightweight token adaptation that adds or reduces execution tokens according to service characteristics.]
        {\includegraphics[width = 0.23\textwidth]{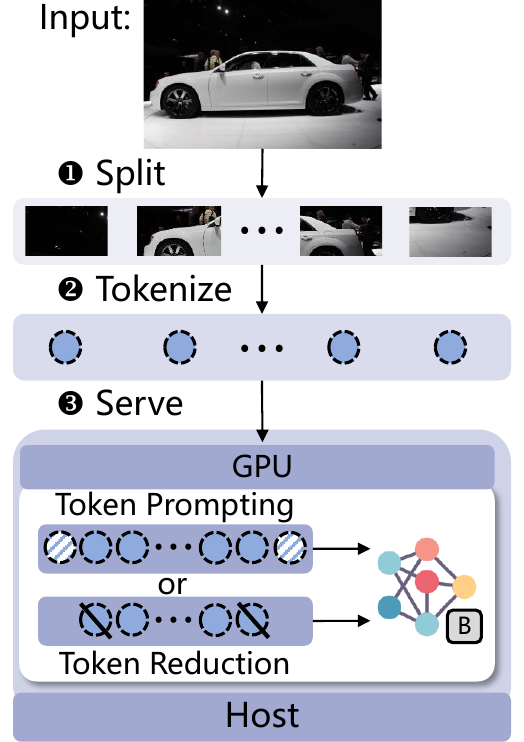}\label{fig:token_introduction}}
    \caption{Comparison between model adaptation and token adaption.}
    \label{fig:introduction_2}
    \end{figure}

As shown in Fig.~\ref{fig:model_introduction}, a common approach to realize an elastic serving platform, called \emph{model adaptation}, is to pre-train multiple model variants and dynamically select one to accommodate the variations in query load~\cite{10.14778/3570690.3570692,wang2023tabi,10.1145/3575693.3575698,romero2021infaas}. 
Unfortunately, it is unsuitable for large transformer models because loading such kind of model to GPU may creates prohibitive I/O overhead~\cite{sheng2023high}.
This scheme also leads to additional training costs, such as high monetary costs and time overhead.
% , and switching different models from system memory can incurs excessive I/O costs
% For example, training the GPT-3 175B can consume several thousand petaflop/s-days of computing power and cost \$4.6 million~\cite{brown2020language,gpt3cost}. 
%Loading such kind of model to GPU may creates prohibitive I/O overhead~\cite{sheng2023high}.
Moreover, the size of different transformer models varies significantly, and it is hard to prepare different fine-grained model versions.
In this paper, we delve into the inherent characteristic of transformer models and develop a novel idea called \emph{token adaptation} for elastic model serving. 
As shown in Fig.~\ref{fig:token_introduction}, a token is a basic unit of text, code, or a patch of an image~\cite{microtoken}. 
The transformer models process these tokens with attention mechanism~\cite{vaswani2017attention}, which calculates the similarity (i.e., attention weight) between the query and key, and projects the attention value with the similarity weight to get a new representation.
% attention有这个特征
One of the key properties of attention is its ability to support token sequences of varying lengths. 
% 这个特征是和模型的performance有关的
A line of works have shown that input tokens have a large influence on the model performance, such as the accuracy and running time.
% 比如， 之前的工作验证减小token数量能加速推理
For example, identifying and removing redundant or unnecessary tokens, such as those representing the background, can expedite inference with little accuracy drop~\cite{bolya2022tome}.
% By leveraging this capability, we can develop an adaptive approach for identifying and removing redundant or unnecessary tokens, such as those representing the background, to expedite inference.
% 另一方面，增加token数量能提升准确率
On the other side, recent studies also demonstrate that adding prompting tokens that contain specific semantic features of the object and interact with input tokens can help produce more accurate results~\cite{jia2022visual}. 
% Therefore, we can add pre-trained prompting tokens to provide beneficial task-specific information for attention to improve accuracy. 
%
Motivated by the above findings, we seek to explore a novel design space about token adaptation that dynamically adjusts the execution tokens of the transformer model to improve service quality with negligible training and I/O costs, i.e., serving more important queries. 
%%%%%%%%%%%%%%%%%%%%%%%%%%%%%%%%%%%%%%%%%%%%%%%%%%%

    % \begin{figure}[t]
    %   \centering
    %   \includegraphics[width=1\linewidth]{}
    %   \caption{The serving framework with fluctuating load and diverse query characteristics.}
    %   \label{fig:introduction_1}
    % \end{figure}

% \begin{figure}[t]
%       \centering
%       \includegraphics[width=1\linewidth]{}
%       \caption{Token adaptation for transformer models in the serving system.}
%       \label{fig:introduction_2}
%     \end{figure}

Despite the promising potential, designing token adaptation for a serving system is non-trivial and faces the following challenges:  (1) \emph{Diverse user requests.} The queries in a batch can vary significantly in query content, task, utility reward, and latency requirements. 
The allocation of token numbers needs to balance the demands of different requests.
%It is challenging to respond to queries promptly in real-time. 
% The system should respond to the queries in real time. 
(2) \emph{Fluctuating query load.} In real scenarios, the query load is fluctuating and bursty. The system should accommodate different numbers of requests with a short reaction time. 
(3) \emph{Model design.} None of the existing transformer models can support fine-grained token management. A unified transformer model needs to process prompt parameters from various tasks.

Therefore, in this paper, we design an elastic transformer serving \textbf{S}ystem via \textbf{O}nline \textbf{T}oken \textbf{A}daptation, named \textsc{OTAS}.
We first introduce the modules of the system and present the pipeline to process the incoming queries.  
Then, we propose a unified transformer model that can flexibly adjust the execution schemes with token prompting and token reduction. 
Moreover, we present an adaptive batching algorithm to group queries with similar service-level objectives, such as latency requirement and utility value, to improve throughput.
To cope with fluctuating query loads and diverse user requirements, we formulate a utility maximization problem with user latency constraints to adaptively adjust the token execution strategy.
Then, we design an efficient dynamic programming algorithm to derive the token execution plan for a batch.

We summarize the contributions as follows.
\vspace{-4pt}
\begin{itemize}
    \item We explore a novel design space for transformer serving, called token adaptation, allowing for flexible manipulation of the token execution plan for transformers. It reveals a new trade-off space between the potential for improving accuracy and the cost of inference latency.
    \item We present \textsc{OTAS}, an elastic transformer serving system via an online token allocation algorithm. We design a batching algorithm that groups similar queries for execution to improve throughput. 
    Besides, we adaptively allocate the token number for a batch to cope with the dynamic query load and diverse user requirements.
    \item We implement a prototype system of \textsc{OTAS}. The experimental results show that \textsc{OTAS} improves the serving utility by at least 18.2\% and serves more requests.
\end{itemize}

\vspace{-6pt}
\section{Background \& Motivation}
\vspace{-2pt}
\subsection{Transformer Model}
    
    Currently, transformer models, which show superior abilities on various vision and language tasks, becomes the mainstream backbone for neural network~\cite{dosovitskiy2020image,devlin2018bert,fedus2022switch}. The transformer model is notable for its ability to capture long-range dependencies for sequential data and be easily scaled up to millions or even billions of parameters~\cite{brown2020language,dehghani2023scaling}. The current paradigm of AI development has been switched to adopt a large-scale pre-trained transformer model for serving~\cite{liu2023pre}.

    \begin{figure}[t]
      \centering
      \includegraphics[width=\linewidth]{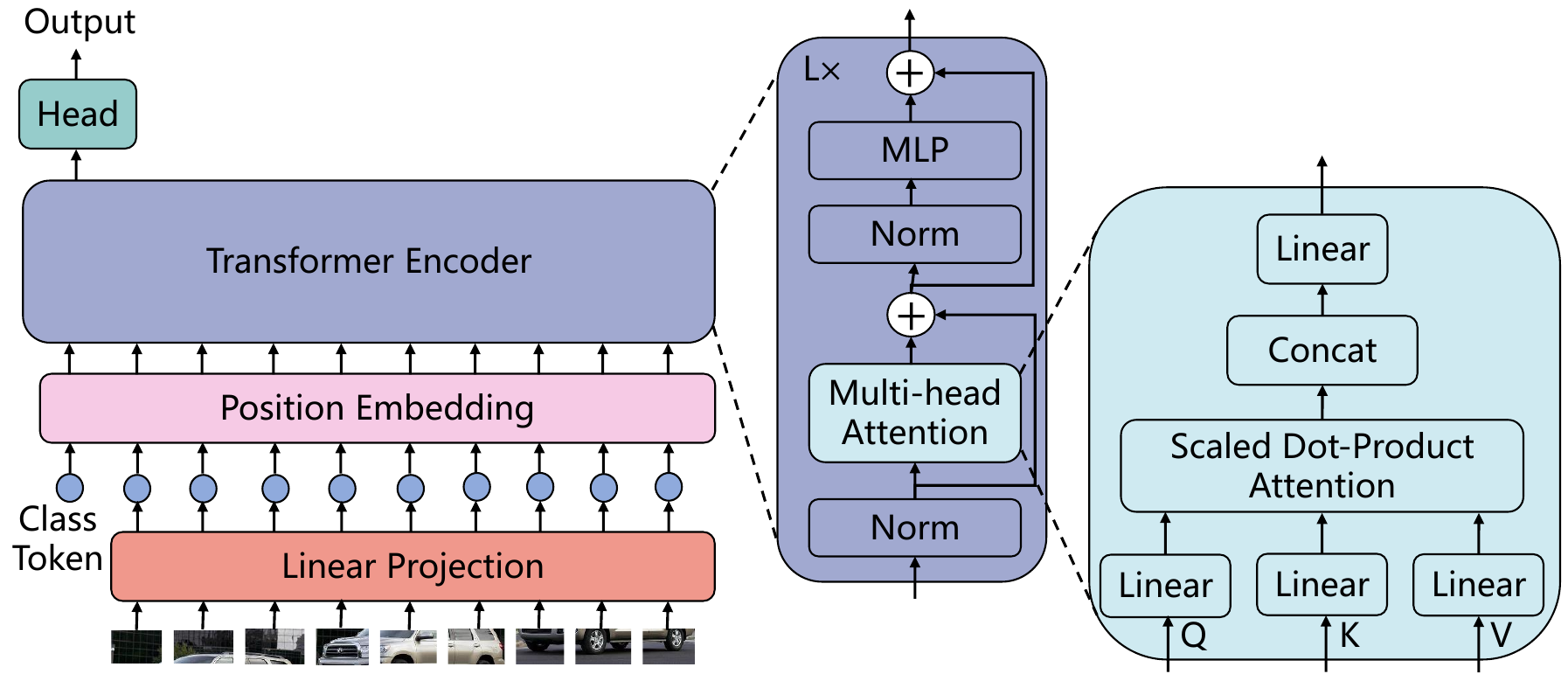}
      \caption{The vision transformer model. The image is split into patches, projected with linear network and added with position embeddings. The tokens are forwarded to the transformer encoder that contains normalization, multi-head attention and multi-layer perception.}
      \label{fig:transformer}
    \end{figure}

    We illustrate our idea based on the vision transformer model as shown in Fig.~\ref{fig:transformer}~\cite{dosovitskiy2020image}. In a vision transformer, an image is split into fixed-size patches and transformed into embeddings through linear projection. The image embeddings are added with the position embedding, and the resulting sequence of tokens is fed into the transformer encoder. A transformer encoder is stacked by a number of attention blocks, which include normalization, multi-head attention (MHA) and multi-layer perception (MLP). 

    Multi-head attention allows the model to extract features from different representation spaces, and each space is called an attention head $i$. 
    In attention, there are a query $Q$, a key $K$, and a value $V$, and $Q, K, V\in \mathbb{R}^{n \times d_{\text{model}}}$, where $n$ is the sequence length of tokens and $d_{\text{model}}$ is the feature dimension. 
    The $Q$, $K$ and $V$ are first mapped to a low-dimension space with the projection parameters (i.e., $\text{head}_i = \text{Attn}(QW_i^{Q},KW_i^{K},VW_i^{V})$). Then, we can use the attention mechanism to model the interactions between tokens and extract the semantic features. We first calculate the attention weight between the query $Q_i$ and the key $K_i$, and apply it to $V_i$ to get a new representation (i.e., $\text{Attn}(Q_i,K_i,V_i) = \text{softmax}(\frac{Q_iK_i^{T}}{\sqrt{d_k}})V_i$).
    % \begin{align}
    %   \text{head}_i = \text{Attn}(QW_i^{Q},KW_i^{K},VW_i^{V}). \label{eq:attention_2} \\
    %   \text{Attn}(Q_i,K_i,V_i) = \text{softmax}(\frac{Q_iK_i^{T}}{\sqrt{d_k}})V_i. \label{eq:attention_1} 
    % \end{align}
    $Q_i, K_i \in \mathbb{R}^{n \times d_k}$, $V_i \in \mathbb{R}^{n \times d_v}$, and $d_k, d_v$ are the feature dimensions. In self-attention, $Q$, $K$ and $V$ are equal to the input $x$ at each layer.
    The concatenated attention head is further processed by a linear module, and the result is forwarded to the subsequent block.
    % \begin{equation}
    % \begin{aligned}
    %   \text{MultiHead}(Q,K,V) = \text{Concat}(\text{head}_1, \cdots , \text{head}_i)W^{O}.
    %   \label{eq:head}
    % \end{aligned}
    % \end{equation}

    %The transformer model can be pre-trained on a large-scale dataset to acquire a broad range of knowledge.
    Instead of training from scratch, researchers tend to apply large-scale pre-trained transformer models to various tasks, which has been widely demonstrated to be a practical paradigm for AI development~\cite{bommasani2021opportunities}. 
    However, a large model size leads to high inference latency and makes it challenging to serve queries with burst query loads.

    A common-used method for elastic serving is \emph{model adaptation} that pre-trains multiple versions of models and dynamically loads an appropriate one during runtime. However, it is infeasible for transformers due to the exorbitant training costs and large I/O delay. In this paper, we explore a novel design for elastic transformer serving by utilizing the inherent characteristic of attention: its ability to support token sequences of varying token lengths. Specifically, we propose \emph{token adaptation} that improves accuracy by token prompting and accelerates inference by token reduction. We illustrate how to enlarge and compress the attention space for elastic serving in section~\ref{secprompt} and section \ref{secreduce}.

\subsection{Improve Accuracy by Token Prompting}
\label{secprompt}

     \begin{figure}[t]
    \centering
	\subfloat[Prompt learning for transformer models.]
        {\includegraphics[width = 0.16\textwidth]{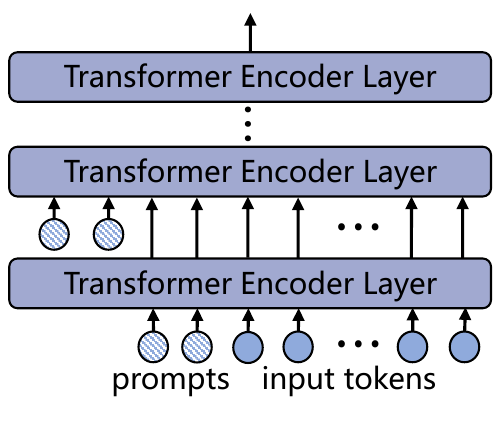}\label{fig:prompt}}
	\hfill
	\subfloat[The token merging mechanism~\cite{bolya2022tome}.]
        {\includegraphics[width = 0.3\textwidth]{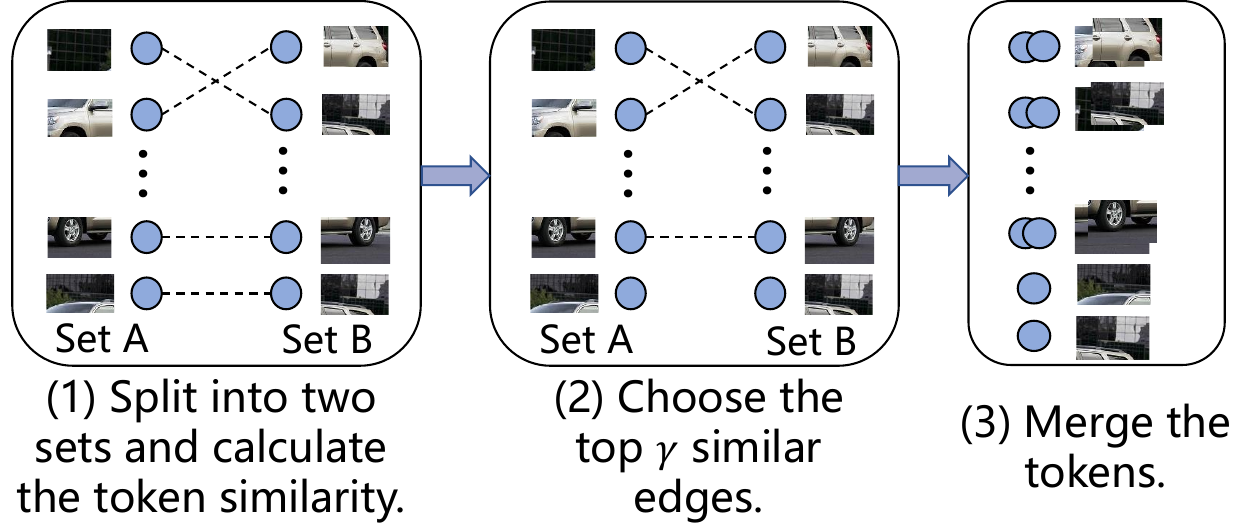}\label{fig:tokenmerge}}
    \caption{Token prompting and token merge.}
    \label{fig:prompt_merge}
    \end{figure}
    
    % \begin{figure}[t]
    %   \centering
    %   \includegraphics[width=0.5\linewidth]{figures/prompt.pdf}
    %   \caption{Prompt learning for transformer models.}
    %   \label{fig:prompt}
    % \end{figure}

    One effective method for enriching the attention space is to add pre-trained tokens through prompt learning.
    Prompt learning is a parameter-efficient fine-tuning approach for pre-trained transformer~\cite{ding2021openprompt}. 
    As shown in Fig.~\ref{fig:prompt}, prompts are added before the input tokens, and the concatenated tokens are forwarded to the transformer layer to conduct multi-head attention. 
    %The informative prompts can inspire the foundation model to generate an accurate result. 
    %The calculation process is depicted in Equation~\eqref{eq:prompt}, and the lengths of the attention elements are expanded with prompt $P$ at each layer. 
    %A number of works have proposed the use of learnable prompts that can better adapt to the data distribution of local datasets~\cite{liu2023pre,jia2022visual}. 
    These prompts are initialized randomly and trained using stochastic gradient descent. During inference, well-trained prompts can be directly prepended to samples.
    %Prompt learning is a flexible tuning methods because it does not influence the original model parameters and can easily plug in and out in real time. 
    % \begin{equation}
    % \begin{aligned}
    %   \text{head}_i = \text{Attn}(\text{concat}(P,Q)W_i^{Q},\text{concat}(P,K)W_i^{K},\text{concat}(P,V)W_i^{V}).
    %   \label{eq:prompt}
    % \end{aligned}
    % \end{equation}
    % In the past, prompts were typically constructed through manual design by expert prompt engineers, which is a laborious task. 

% To explore the trade-off between accuracy and throughput of prompt learning,

    Before using prompts for token adaptation, we need to understand the service-level characteristics of prompt learning, i.e., the accuracy and throughput. 
    We train prompts for a pre-trained ViT model on CIFAR10 and CIFAR100 dataset~\cite{krizhevsky2009learning} and set the prompt number as \{2, 4, 8, 16, 32\}. We sampled 1/5 of the training set as the profiling set and evaluated the performance both on the profiling set and testing set. 
    As depicted in the right part of Fig.~\ref{fig:pre_two_acc}, the accuracy exhibits a sharp increase when two prompts are added per layer, and slightly rises as the number of prompts increases further. 
    Prompt learning has greater benefits for difficult tasks, such as CIFAR100.
    Fig.~\ref{fig:pre_two_throughput} shows the throughput result on an NVIDIA GeForce RTX 4080 machine. 
    When the prompt number is increased (0$\sim$32), there is a declining trend in the serving throughput, decreasing from 580 Req/s to 220 Req/s.

    \begin{figure}[t]
    \centering
	\subfloat[Accuracy Comparison.]
        {\includegraphics[width = 0.23\textwidth]{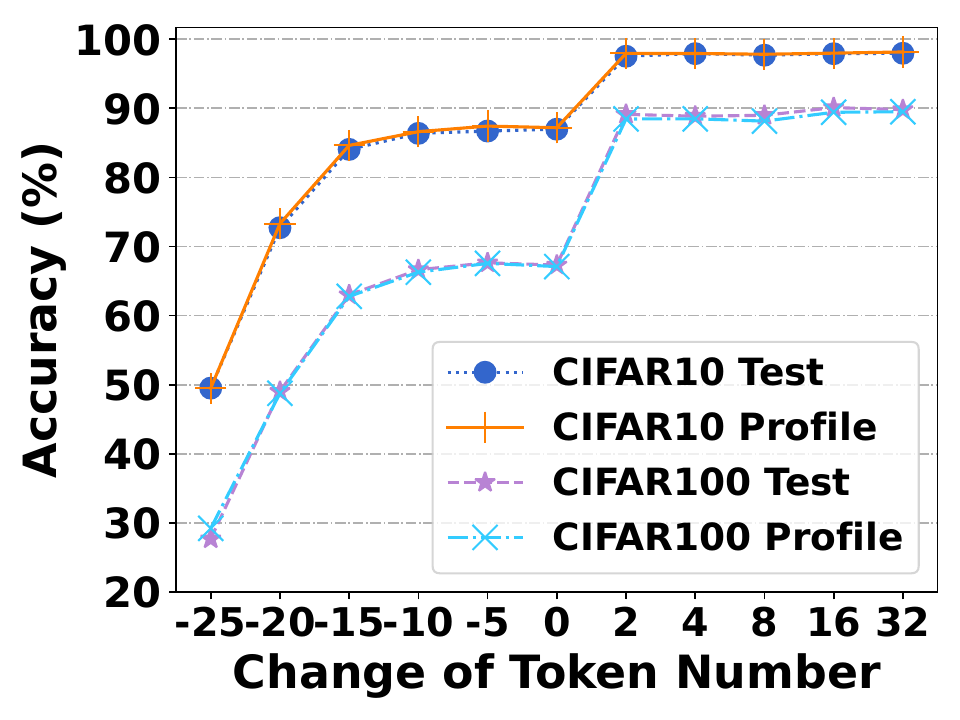}\label{fig:pre_two_acc}}
	\hfill
	\subfloat[Throughput Comparison.]
        {\includegraphics[width = 0.23\textwidth]{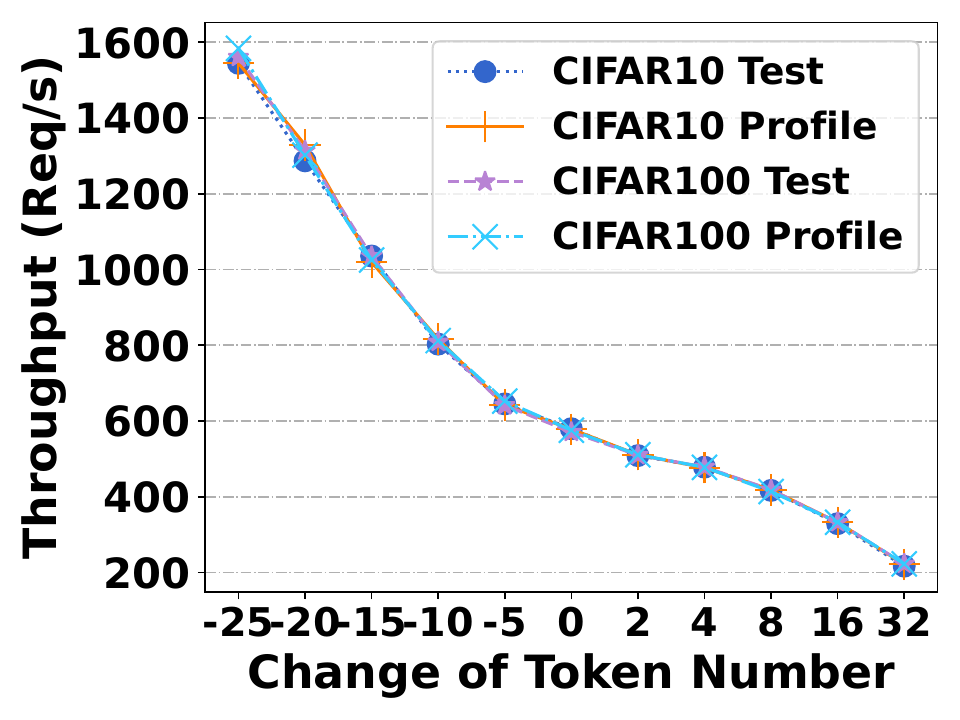}\label{fig:pre_two_throughput}}
    \caption{Accuracy and throughput comparison when we change the numbers of tokens.}
    \label{fig:pre_two_acc_thr}
    \end{figure}
    
    To fully leverage the benefits of prompts, we need to find a sweet spot between accuracy and latency for the incoming requests that align with the request burden, task type, and hardware resources.  

    % \begin{insight}
    % Increasing the number of tokens can significantly improve accuracy, but it also reduces the throughput. Furthermore, the impact of prompt learning varies across different tasks.
    % \end{insight}
    
\subsection{Accelerate Inference by Token Reduction}
\label{secreduce}

    %  \begin{figure}[t]
    %   \centering
    %   \includegraphics[width=0.8\linewidth]{figures/tokenmerge.pdf}
    %   \caption{The token merging mechanism~\cite{bolya2022tome}.}
    %   \label{fig:tokenmerge}
    % \end{figure}

    One effective method for compressing the attention space is to remove redundant or unnecessary tokens, which is useful for accelerating inference. % because it can greatly reduce the token length
    % There are two kinds of token reduction strategies: (1) Token pruning~\cite{kim2022learned,kong2022spvit,liang2022evit}. Given the presence of some unnecessary tokens in both sentences and images, it is possible to eliminate tokens that do not significantly contribute to the final output.
    % The importance of a token can be reflected by its attention weight, and conventional approaches discarded tokens with lower weights. 
    % (2) Token merging~\cite{bolya2022tome,wei2023joint}. There may be redundant semantic information present in various sections of sentences and images, so it is possible to merge these similar tokens into a single token that conveys the combined information to subsequent layers.
    In this paper, we reduce token number by merging similar tokens during inference. 
    We illustrate a state-of-the-art token merging method called ToMe~\cite{bolya2022tome} in Fig.~\ref{fig:tokenmerge}. 
    Firstly, the tokens are split into two sets, and samples in set A pick the most similar sample in set B. Assuming we merge $\gamma$ tokens per layer, the second step is to keep only $\gamma$ edges with the highest similarity values. Finally, similar tokens are merged using a weighted average and concatenated into a new sequence.
    
 %    \begin{figure}[t]
 %    \centering
	% \subfloat[Accuracy Comparison.]
 %        {\includegraphics[width = 0.23\textwidth]{}\label{fig:pre_merge_acc}}
	% \hfill
	% \subfloat[Throughput Comparison.]
 %        {\includegraphics[width = 0.23\textwidth]{}\label{fig:pre_merge_throughput}}
 %    \caption{Accuracy and throughput comparison when we remove different numbers of tokens.}
 %    \label{fig:pre_merge_acc_thr}
 %    \end{figure}

    We explore the performance of token merging with ToMe on the CIFAR10 and CIFAR100 datasets. We use the vision transformer pre-trained on ImageNet 21K~\cite{ridnik2021imagenet21k} as the backbone and set the merging number from -25 to 0. As shown in  Fig.~\ref{fig:pre_two_acc}, reducing the merging number results in a slight decrease in accuracy, but the trend changes once the merging number drops below -15. At this point, the accuracy declines sharply to 50\% and 28\% for the two datasets.
    The changes of throughput are shown in Fig.~\ref{fig:pre_two_throughput}, which reveals that the throughput has a gradual decline from 1500 Req/s to 580 Req/s (-25$\sim$0). The overhead of the merging algorithm is less than 20ms.

    Though token reduction has its noticeable advantages, it is hard to determine a suitable merging ratio regarding various request inputs, query load and computational resources. 

    % \begin{insight}
    %     Token reduction enables accelerating inference with little accuracy degradation, which can be used for elastic transformer serving with careful merging ratio settings.
    % \end{insight}
    
\section{\textsc{OTAS}: Online Token Adaptation System}

    Based on the above analysis, we propose \textsc{OTAS}, a unified framework that achieves online token adaptation for transformer serving. Our framework can autonomously change the token number during the serving period, which allows for efficient and adaptive serving of transformer models.

    % To enable online token adaptation, we design a transformer serving framework that supports dynamic token adaptation in section~\ref{sysdesign}. Our framework can adjust the token number in real time based on the query load and user latency constraints.
    % We present a unified transformer model that incorporates both token prompting and token merging in section~\ref{modeldesign}. This model allows for fine-grained control over the number of tokens for a batch of queries, enabling us to balance accuracy and efficiency during inference. 
    % To further optimize the serving quality, we propose adaptive batching and token allocation algorithms in section~\ref{adapbatch} and section~\ref{onlineadapt}. Our batching algorithm groups queries with similar patterns together, which improves the throughput of the serving system. 
    % Meanwhile, our token allocation algorithm dynamically allocates token numbers for each batch, optimizing the serving utility under a user latency constraint.

\subsection{System Design}
\label{sysdesign}
    \begin{figure}[t]
      \centering
      \includegraphics[width=0.9\linewidth]{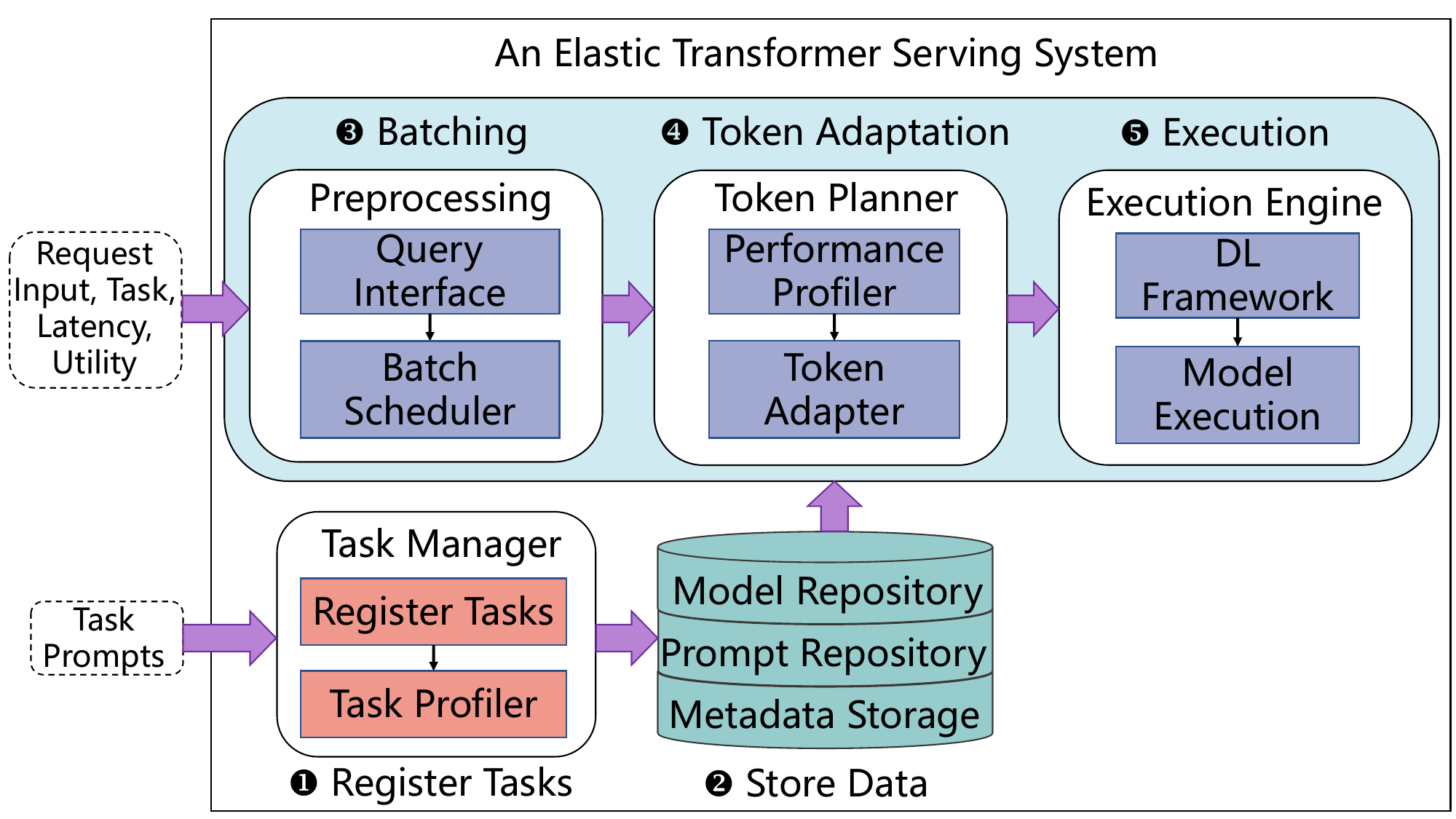}
      \caption{The framework of \textsc{OTAS}, which can assign a query to a batch and allocate the token number automatically.}
      \label{fig:systemdesign}
      \vspace{-10pt}
    \end{figure}

    We first introduce a new serving framework for elastic transformer inference named \textsc{OTAS}.
    The components of \textsc{OTAS} are shown in Fig.~\ref{fig:systemdesign}. The framework consists of two main workflows: task register and query processing. 

    To register a new task, the developer should submit the prompts with the required token numbers, and the prompt parameters are stored in the repository. The task profiler calculates the accuracies and inference latencies for different token numbers and batch sizes on the target device. The profiling data is stored in the metadata storage for future use. 

    The system is designed to handle incoming queries with varying arrival times, inputs, tasks, utilities, and latency requirements. When a query is received, it is added to a batch using a batch scheduler. The batching strategy is described in section~\ref{adapbatch}.
    The resulting batch may contain queries from different tasks with varying utilities and latency requirements. The batch is then stored in a batch queue and awaits execution. The performance profiler is used to predict the accuracy and inference time for different token number settings. The token adapter module uses the profiling data to allocate token numbers for the batches, which is illustrated in section~\ref{onlineadapt}. Finally, the execution engine is responsible for sequentially executing the batches with a transformer model.
 
\subsection{Model Design}
\label{modeldesign}

    % , enabling the transformer model to perform token adaptation flexibly
    To support flexible transformer inference, we propose a unified model that supports token prompting and reduction. As shown in Fig.~\ref{fig:modeldesign}, the prompting module is added before the normalization, and the merging module is added before the MLP. We define the token number change per layer as $\gamma$, where $\gamma>0$ means adding the prompting tokens and $\gamma<0$ means removing some useless tokens.

    The prompting tokens are trained offline and stored in the prompt repository. 
    A token pair is associated with a task and a prompt number, which serves as its index.
    We first initialize the prompt repository randomly and train each token pair separately. 
    We acquire a token pair from the prompt repository at every training epoch and concatenate them with the input tokens. If the batch size is $n_b$ and the input token length is $n_i$, the token shape becomes $n_b\times (n_i+\gamma)$ after prompting. The concatenated tokens can be forwarded to the next module. 
    During inference, the model uses the well-trained prompt parameters in the repository directly.
    The added prompting tokens can inspire the multi-head attention to generate a better result.
    Regarding token merging, the model directly processes the input tokens. Given the token similarity obtained from multi-head attention and a merging rule, the merging module can reduce the token shape from $n_b \times n_i$ to $n_b \times (n_i-|\gamma|)$.
    
    We insert these two modules at each layer. To simplify the design, we assume the model can perform either token prompting or token reduction during inference. Different tasks also have specific head parameters, and the model forwards the sample to its corresponding head to obtain an appropriate prediction probability.

\begin{figure}[t]
      \centering
      \includegraphics[width=\linewidth]{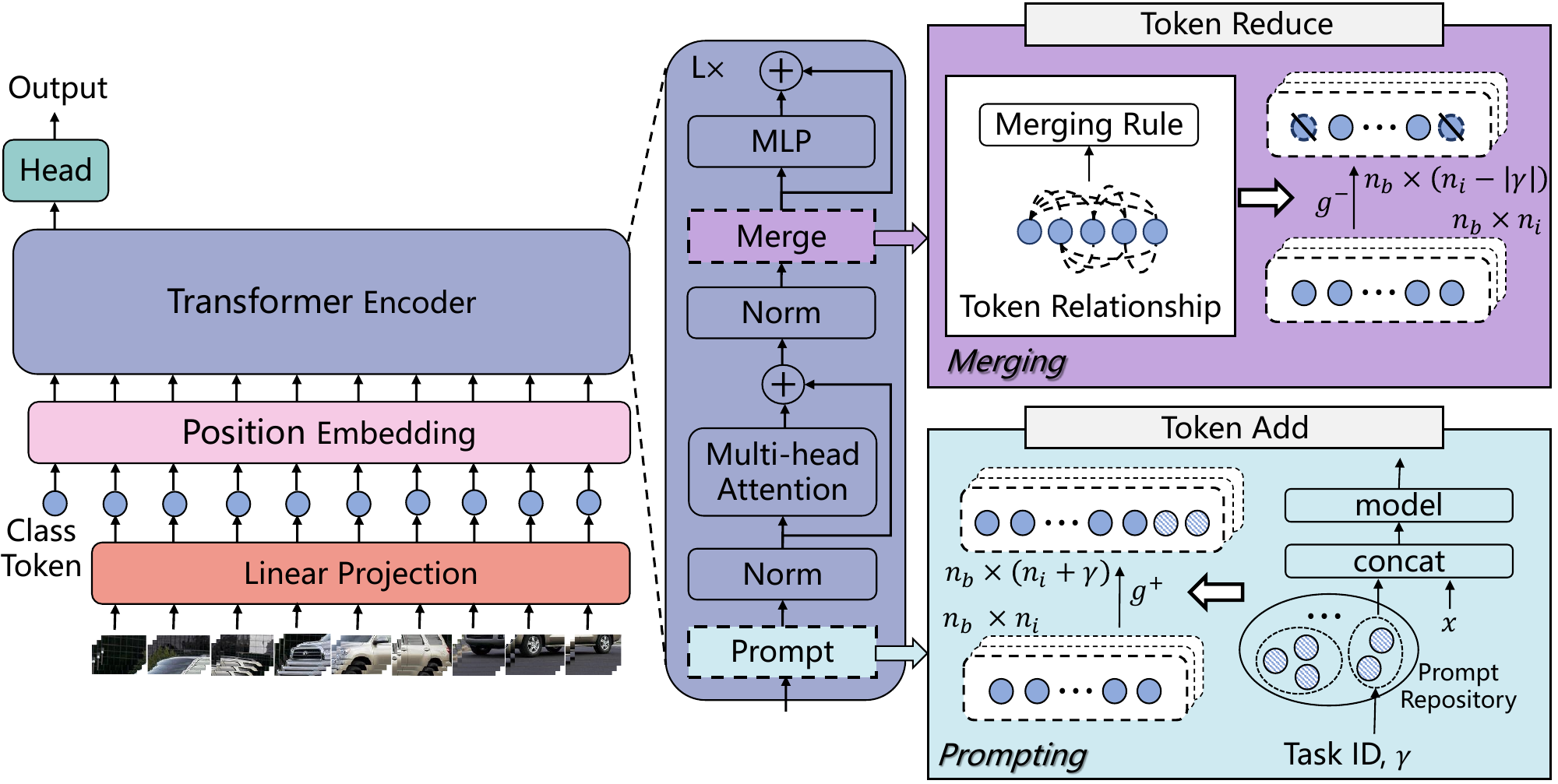}
      \caption{A unified transformer model that incorporates token prompting and token merging.}
      \label{fig:modeldesign}
    \end{figure}
    
\subsection{Adaptive Batching}
\label{adapbatch}

    Batching queries for inference can improve the system's throughput by making full use of the computational capabilities of the device and reducing the costs associated with model initialization and data communication~\cite{crankshaw2017clipper,cui2022dvabatch}.  Fig.~\ref{fig:pre_batch} illustrates the advantages of batching, where we evaluate the throughput with batch sizes ranging from 1 to 64. 
    The throughput shows a rapid increase as the batch size is increased. For example, when $\gamma=-15$, the throughput increases from 100 Req/s to 1000 Req/s and converges when the batch size is 20.  
    Therefore, batching can significantly improve throughput and enable more efficient query processing.
    
    \begin{figure}[t]
    \centering
	\subfloat[Throughput comparison on CIFAR10.]
        {\includegraphics[width = 0.23\textwidth]{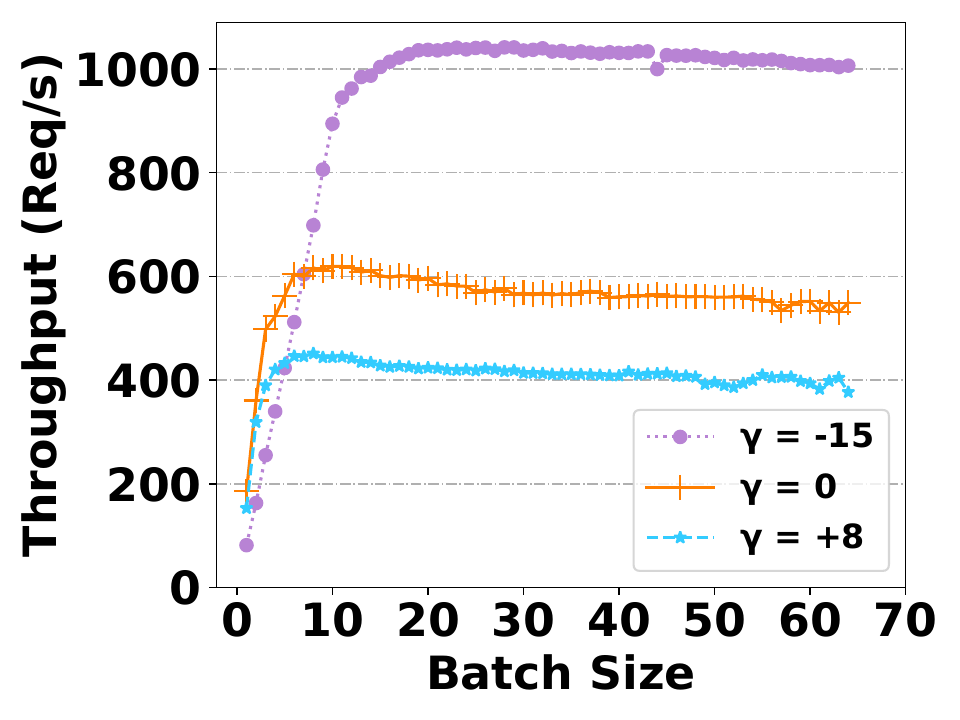}\label{fig:batch_10}}
	\hfill
	\subfloat[Throughput comparison on CIFAR100.]
        {\includegraphics[width = 0.23\textwidth]{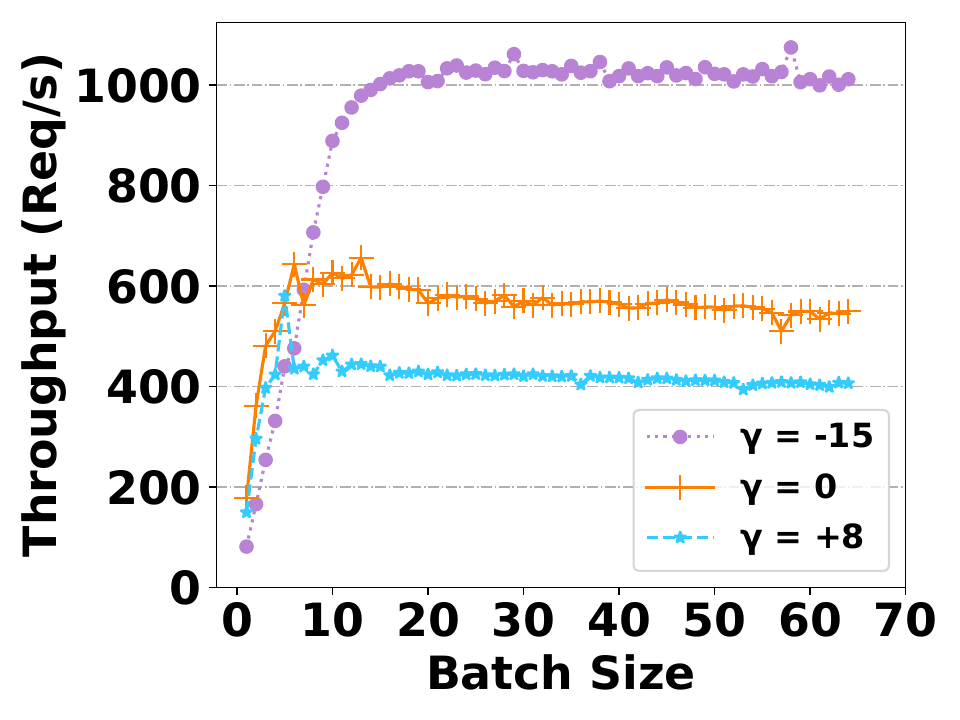}\label{fig:batch_100}}
    \caption{Throughput comparison of different batch sizes.}
    \label{fig:pre_batch}
    \end{figure}

    \begin{algorithm}[t]
    \caption{Batching Algorithm: Adding a Query into a Batch.} 
    \label{alg_batch} 
    \KwIn{Batch Queue $B$, Query $r$;}
    \KwOut{Batch Queue $B$;}
    \For{$b \in [N_B,1]$}{
        \If(\tcp*[f]{Arrival time;}){$s_b + \delta < s_r$}{ 
        break\;
        }
        \ElseIf(\tcp*[f]{Batch size;}){$|B_b| >= \epsilon$}{
            continue\;
        }
        \ElseIf(\tcp*[f]{Finish time;}){$\Vert d_b - d_r\Vert>\eta$}{
            continue\;
        }
        \ElseIf(\tcp*[f]{Utility;}){$\Vert u_b - u_r\Vert>\mu$}{
            continue\;
        }
        $B_b$.add($r$); \tcp*[f]{Find a batch $b$;} \\
        \Return $B$\;
    }
    $B$.add(\{$r$\});  \tcp*[f]{Create a new batch;} \\
    \Return $B$\;
    \end{algorithm}
    
    While batching has clear benefits, one challenge is to design a batching strategy that effectively groups similar requests together. To achieve this, we batch incoming queries based on their similar arrival patterns and service-level objectives, such as latency constraint and utility.
    We use the notation $r$ to represent a request, where $s_r$, $l_r$, $d_r$, and $u_r$ represent the request's arrival time, latency requirement, finish deadline, and utility, respectively, such that $d_r=s_r+l_r$.

    The grouped queries are stored in the batch queue $B$, and we denote the $b$-th batch as $B_b$. The arrival time of a batch $b$ is defined as the earliest arrival time among its requests, i.e., $s_b=\text{min}\{s_r\}, r\in B_b$. Similarly, the finish deadline of a batch $b$ is defined as the earliest required finish time among its requests, i.e., $d_b=\text{min}\{d_r\}, r\in B_b$. 

    The batching algorithm is described in Algorithm~\ref{alg_batch}, which assigns a query to the current batches or initializes a new batch. The key idea of the algorithm is constructing a batch with constraints on batch size, arrival time, utility and deadline. 
    Specifically, the algorithm ensures that the waiting time of the first request in a batch is less than $\delta$, the batch size is smaller than a pre-defined threshold $\epsilon$, and the deadline difference between the batch and the query $r$ is not larger than a threshold $\eta$.  We use $u_b$ to represent the utility of the first arrival query in a batch $b$, and restrict the utility value for subsequent incoming query $r$ to be close to the value of $u_b$ with a threshold $\mu$.
    These constraints ensure that queries with similar arrival patterns and service-level objectives can be processed together, which is beneficial for token adaptation.
    If a batch that meets the constraints for the incoming query is found, the query is added to that batch $b$ (Line 1$\sim$9). Otherwise, a new batch is created for the query and added to the batch queue.

\subsection{Online Token Adaptation}
\label{onlineadapt}

    After constructing the batch queue, the next step is to assign token adaptation schemes for batches. In this section, we present an optimization problem for token adaptation and propose a dynamic programming algorithm to obtain the solution.
    
\subsubsection{Problem Formulation}
%\mathop{\max}\limits_{\gamma_b} &

     We define the token adaptation scheme for a batch $b$ as $\gamma_b$, where $\gamma_b<0$ indicates reducing the token number, $\gamma_b>0$ indicates adding some prompting tokens, and $\gamma_b=0$ indicates making the inference with the vanilla transformer model. 
     $\gamma$ is a discrete value that can be selected from a pre-defined list.
     If the serving system successfully provides an accurate result for request $r$ under the latency requirement, the system can be rewarded with utility $u_r$, such as the money. We use $\alpha_r \in \{0,1\}$ to represent whether it successfully serves query $r$. The required memory of batch $b$ and the available GPU memory are denoted as $M_b$ and $M_{\text{GPU}}$.

    The optimization problem is defined in Eq.~\eqref{equa:optim}, where the goal is to allocate the token change number $\gamma$ to maximize the overall utility for all requests. Constraint~\eqref{equa:optim-a} ensures that all requests can be completed within their respective deadlines, where $t_r^{(q)}$ and $t_r^{(p)}$ are the queuing time and processing time. 
    Constraint~\eqref{equa:optim-b} ensures that the batches are executed sequentially. Constraint~\eqref{equa:optim-c} imposes a memory restriction, as larger batch sizes and prompt numbers may increase the memory demand.
    \begin{align}
    \mathop{\max}\limits_{\gamma_b} &\sum_{b \in [1,N_B]} \sum_{r \in B_b} u_r \cdot \alpha_r; \label{equa:optim}\\
    s.t. \quad & s_r + t_r^{(q)} + t_r^{(p)} < d_r, \forall r \in B_b; \tag{\ref{equa:optim}{a}} \label{equa:optim-a}\\
    & s_r + t_r^{(q)} + t_r^{(p)} < s_{r'} + t_{r'}^{(q)}, \forall r \in B_b \text{ and } \forall r' \in B_{b+1}; \tag{\ref{equa:optim}{b}} \label{equa:optim-b}\\
    & M_b < M_{\text{GPU}} \tag{\ref{equa:optim}{c}} \label{equa:optim-c}.
    \end{align}

    The above problem formulation considers both the query load and request characteristics. If the batch queue has a high volume of queries, we should pick a smaller $\gamma$ to reduce the queuing and processing time and serve more requests.
    Conversely, we can increase the value of $\gamma$ to derive an accurate result and earn more utilities.
    Then, we analyze the NP-hard property of problem~\eqref{equa:optim}.
    
    \begin{theorem}
    \label{nphard}
    The problem~\eqref{equa:optim} is an NP-hard problem.
    \end{theorem}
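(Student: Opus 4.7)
The plan is to exhibit a polynomial-time reduction from a classical NP-hard problem to problem~\eqref{equa:optim}. The most natural choice is the \textbf{0/1 knapsack problem}: given $n$ items with positive integer weights $w_1,\ldots,w_n$, values $v_1,\ldots,v_n$, and a capacity $C$, decide whether a subset of items with total weight at most $C$ achieves total value at least some target $V$. Knapsack is well known to be NP-hard, and a Karp-style reduction suffices to transfer that hardness onto our token-adaptation problem.

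Concretely, I would construct the following token-adaptation instance from an arbitrary knapsack instance. Create $n$ singleton batches $B_1,\ldots,B_n$, where $B_i$ contains one request $r_i$ with utility $u_{r_i}=v_i$, common arrival time $s_{r_i}=0$, and common deadline $d_{r_i}=C$. For each batch $B_i$, allow only two candidate token-adaptation schemes in the profiling table: an ``include'' choice $\gamma_i^{\text{in}}$ with processing time $t_i^{(p)}=w_i$ and $\alpha_{r_i}=1$, and an ``exclude'' choice $\gamma_i^{\text{out}}$ with processing time $0$ and $\alpha_{r_i}=0$. Set $M_b$ and $M_{\text{GPU}}$ large enough that the memory constraint~\eqref{equa:optim-c} is vacuous. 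Because execution is sequential by constraint~\eqref{equa:optim-b}, the queuing time of request $r_i$ equals the sum of processing times of earlier ``included'' batches, so \emph{every} included request meets its deadline $C$ if and only if the total weight of included items is at most $C$. Maximising $\sum_b\sum_{r\in B_b} u_r\alpha_r$ therefore coincides exactly with maximising the knapsack value.

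The key steps to finish the argument are: (i) verify that the reduction is computable in polynomial time (obvious, as it produces $O(n)$ batches with $O(1)$ candidate schemes each); (ii) argue forward that any knapsack solution of value $\geq V$ induces a feasible $\{\gamma_b\}$ of objective value $\geq V$; and (iii) argue backward that any feasible $\{\gamma_b\}$ of objective value $\geq V$ determines a knapsack subset of weight $\leq C$ and value $\geq V$. The backward direction uses constraint~\eqref{equa:optim-a}: if some $r_i$ has $\alpha_{r_i}=1$, then the cumulative processing time up to and including $B_i$ must not exceed $C$, so in particular the total weight of all included items is at most $C$.

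The main obstacle I anticipate is not the combinatorics but the \emph{modelling check}: one must make sure the reduction only relies on features the serving system genuinely supports—namely a discrete set of $\gamma$ values whose per-batch processing times and success indicators can be arbitrary nonnegative rationals in the profiling table. Since the paper explicitly says $\gamma$ is ``a discrete value that can be selected from a pre-defined list'' and latencies/accuracies are determined by the profiler, this modelling step is legitimate. A minor secondary subtlety is handling the utility threshold $\mu$ from the batching stage: to ensure our reduction produces the prescribed singleton batches, it is cleanest to treat problem~\eqref{equa:optim} as parameterised by an \emph{arbitrary} batch queue $B$ (the form in which the optimisation is actually solved), so the adaptive batching heuristic does not interfere with the hardness construction.
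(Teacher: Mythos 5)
Your proposal is correct and, in fact, goes in a genuinely different (and more defensible) direction than the paper's own argument. The paper reduces via the Weighted Interval Scheduling Problem and even writes ``we can transform our problem to the WISP,'' which is the wrong direction for a hardness claim (one must reduce a known NP-hard problem \emph{to} the problem at hand); moreover, classical single-machine weighted interval scheduling is solvable in polynomial time by dynamic programming after sorting by finish time, so it is not a sound source of hardness unless one of its NP-hard generalizations (e.g., job interval selection) is meant. Your reduction from 0/1 knapsack starts from a genuinely NP-hard problem and reduces in the correct direction: singleton batches with common deadline $C$, an ``include'' scheme of cost $w_i$ with $\alpha_{r_i}=1$ and an ``exclude'' scheme of cost $0$ with $\alpha_{r_i}=0$, so that under sequential execution the last included batch completes at $\sum_i w_i$ over included items regardless of order, making feasibility exactly the capacity constraint and the objective exactly the knapsack value. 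The remaining points are minor: the strict inequality in constraint~\eqref{equa:optim-a} is handled by taking the deadline to be $C+\tfrac{1}{2}$ with integer weights; your reduction treats the profiled latencies and success indicators as free inputs of the optimization instance, which matches how the paper poses problem~\eqref{equa:optim} over an arbitrary batch queue and profiling table; and the construction establishes only weak NP-hardness (knapsack has a pseudo-polynomial algorithm), which is all the theorem claims and is consistent with the paper's subsequent use of a dynamic program.
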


    \begin{proof}
    The token adaptation problem is an NP-hard problem because it can be reduced from another
    NP-hard problem--Weighted Interval Scheduling Problem (WISP)~\cite{kolen2007interval}. Given a set of intervals
    with a weight, the objective of WISP is to select some intervals that can maximize the sum of
    the weights while the selected intervals are pairwise disjoint. We can transform
    our problem to the WISP. We consider each batch as an interval with a weight equal to its utility.
    Our goal is to efficiently process the batches so that the sum of utilities is maximized.  
    Our problem is more difficult than WISP because we also need to adjust the running time for the picked intervals with different $\gamma$ values.
    \end{proof}
    
\subsubsection{Algorithm Design}

    Due to the NP-hardness of the above problem, we propose an efficient dynamic programming algorithm to derive the solution in Algorithm~\ref{alg_allocate}, which takes the batch queue $B$, current time $T$, the available $\gamma$ list and the estimated arriving rate $q$ as inputs and outputs the updated batch queue with allocated token number $\gamma$. The key idea is to find the largest utility value for a batch $b$ with a $\gamma_b$ through iterative traversal.

    \begin{algorithm}[t]
    \caption{Autonomous Token Adaptation Algorithm} 
    \label{alg_allocate} 
    \KwIn{Batch Queue $B$, Clock Time $T$, Selection List $L^{(\gamma)}$, Incoming Request Rate $q$;}
    \KwOut{Batch Queue $B$;}
    Sort($B$) according to $d_b$\;
    \If{$N_B \leq \beta$ or \text{initial\_stage}=\text{True}}{
        $B$ = Manually\_Allocate($B,T,L^{(\gamma)},q$) \;
        \Return $B$\;
    } 
    Initialize $dp \in \mathbb{R}^{(N_B+1)\times (N_{\gamma}+1)}$ by $0$\;
    Initialize $S \in \mathbb{R}^{(N_B+1)\times (N_{\gamma}+1)}$ by $1$\;
    Initialize $C \in \mathbb{R}^{(N_B+1)\times (N_{\gamma}+1)}$ by $T$\;
    Initialize $J \in \mathbb{R}^{(N_B+1)\times (N_{\gamma}+1)}$ by $0$\;
    \For{$b \in [1,N_B]$}{
        \For{$l_{b} \in [0,N_{\gamma}]$}{ 
            \For{$l_{b-1} \in [0,N_{\gamma}]$}{
                \If{$dp[b-1,l_{b-1}] == -\infty$}{
                    continue;
                } 
                \If{$l_{b}==0$}{
                    \If{$dp[b-1,l_{b-1}]>dp[b,l_{b}]$}{
                        $dp[b,l_{b}]=dp[b-1,l_{b-1}]$\; 
                        $S[b,l_{b}]= l_{b-1}$\;
                        $C[b,l_{b}]= C[b-1,l_{b-1}]$\; 
                        $J[b,l_{b}]=1$\;
                    }
                }
                \Else{
                    $\gamma_b = L^{(\gamma)}[l_{b}]$\;
                    $\hat{t}_r^{(p)}, \hat{U}_b = \text{Profile}(B_b, \gamma_b)$\;
                    \If{$C[b-1,l_{b-1}]  + \hat{t}_b^{(p)} < d_b$}{
                        $u=dp[b-1,l_{b-1}] + \hat{U}_b$\;
                        $J[b,l_b]=1$\;
                        \If{$u > dp[b,l_b]$}{
                            $dp[b,l_b]=u$\;
                            $S[b,l_b]= l_{b-1}$\;
                            $C[b,l_b]=C[b-1,l_{b-1}] + \hat{t}_b^{(p)}$\;
                        }
                    }
                }
            }  
            \If{$l_b>0$ and $J[b,l_b]==0$}{
                $dp[b,l_b] = -\infty$; \\
                $C[b,l_b] = +\infty $\;
            }
        }
    }
    $l=\arg\max dp[N_B]$; \\  
    $B_{N_B}.\gamma = L^{(\gamma)}[l]$\;
    \For{$b=N_B-1$ to 1}{
        $l = S[b+1,l]$\;
        $B_{b}.\gamma = L^{(\gamma)}[l]$\;
    }
    \Return $B$\;
    \end{algorithm}
    
    \begin{algorithm}[t]
    \caption{Manually\_Allocate: Allocate $\gamma$ According To The Arriving Rate.} 
    \label{alg_allocate_manual} 
    \KwIn{Batch Queue $B$, Clock  Time $T$, Selection List $L^{(\gamma)}$, Incoming Request Rate $q$;}
    \KwOut{Batch Queue $B$;}
    $\gamma$ = $f(q)$\;
    \For{$b \in [1,N_B]$}{
        $\hat{t}_r^{(p)} = \text{Profile}(B_b, \gamma)$;  \\
        \If{$T+ \hat{t}_r^{(p)} \geq d_b$}{
            $B_b.\gamma \gets \min(L^{(\gamma)}$);
        } 
        \ElseIf{$\overline{U_b} > \kappa$}{
            $B_b.\gamma \gets \max(L^{(\gamma)}$);
        }  
        \Else{
            $B_b.\gamma \gets \gamma$;
        }
        $\hat{t}_r^{(p)} = \text{Profile}(B_b, B_b.\gamma)$\; 
        $T = T+\hat{t}_r^{(p)}$;
    }
    \Return $B$\;
    \end{algorithm}

    We begin by sorting the batches according to their required deadlines. If the size of the batch queue is less than a threshold $\beta$ or the serving system is in the initial stage, we allocate the token number based on the query load with Algorithm~\ref{alg_allocate_manual}. This is because the dynamic programming algorithm works well when there are sufficient batches to make a long-term schedule. Algorithm~\ref{alg_allocate_manual} allocates the token number $\gamma$ by comparing the incoming request rate $q$ and the throughput of different $\gamma$ values.  We calculate the arriving rate $q$ within the previous inference window and apply a function $f$ to map $q$ to a suitable value of $\gamma$ (Line 1). $f$ can be profiled offline according to the throughput of different $\gamma$ values. Then, we adjust the selection of $\gamma$ according to the query characteristics. 
    We predict the execution time for a batch $b$. If the estimated completion time exceeds the deadline, we set the token number as the minimum value to meet the latency constraints (Line 3$\sim$5). 
    If the average utility $\overline{U_b}$ is larger than a threshold $\kappa$, we set the token number as the maximum value to prioritize the critical queries. Finally, we estimate the execution time and update current time $T$.  

    Algorithm~\ref{alg_allocate} utilizes four auxiliary arrays of size $(N_B+1)\times(N_{\gamma}+1)$ to implement dynamic programming, where $N_B$ and $N_{\gamma}$ are the sizes of batch queue and the number of available $\gamma$ values. Specifically, $dp$ records the accumulated utilities, $S$ records the previous $\gamma$ selection scheme, and $C$ records the clock time after executing batch $b$ with $\gamma$. The array $J$ indicates whether executing $b$ with $\gamma$ satisfies the deadline requirement. 
    For each batch in the batch queue, we iteratively assign a value of $\gamma$ from the list $L^{(\gamma)}$ to batch $b$ using the index $l_b$ (Line 9$\sim$11). 
    If batch $b-1$ cannot be executed with $\gamma$ indexed with $l_{b-1}$, we continue to the next iteration of the loop (Line 12$\sim$13). When the value of $l_b$ is 0, it indicates that batch $b$ is not executed, and we directly find a larger utility value from batch $b-1$ and assign it to batch $b$ (Line 14$\sim$19).

    When executing $\gamma_b$ for batch $b$, we first estimate inference time and utility through profiling (Line 22). If the completion time is smaller than the required deadline, we calculate the overall utility and set the execution plan as 1 (Line 23$\sim$25). If the utility is larger than the previous values, we update the matrixes. If there is no feasible execution plan for batch $b$ with $\gamma_b$, we set the $dp$ value as $-\infty$ and the clock time as $+\infty$ (Line 30$\sim$32).
    
    Once we have calculated the utility values and their corresponding choices, we can derive the solution by backtracking. We first determine the value of $\gamma$ for the $N_B$-th batch based on the highest $dp$ value. For each batch, we obtain the index of $\gamma$ according to the value of $S[b+1,\gamma]$. Finally, we return the updated batch queue $B$.

    We estimate the execution time and utility for a batch $b$ with the profiling data. We profile the accuracy and sample-level inference latency for all tasks and store them in the metadata storage.
    To estimate the inference time for the current batch, we first count the number of samples for each task, and then multiply the sample number by the corresponding profiling inference time to obtain the execution time for that task. We then sum up the calculation results for all tasks to obtain the predicted inference time of a batch.
    To calculate the overall utility, we compute the product of the accuracy with a selected $\gamma$ and the utility of each query in the batch. Then, we sum up the product result of all queries to obtain the total utility of a batch. 
    During profiling, we ensure that all the running processes adhere to the memory constraints of Eq.~\eqref{equa:optim-c}.
    
    % The time complexity of Algorithm~\ref{alg_allocate} is $O(N_B \times N_{\gamma} \times N_{\gamma})$, which has very small computational burden. using different $\gamma$ values and batch sizes

    % \begin{table}[t]
    % \centering
    % \begin{tabular}{c}
    %    \toprule
    %    \textbf{Front-end Interface}  \\
    %    \midrule
    %     \textsf{Make\_Query (sample, query\_attribute)}  \\
    %     \textsf{Register\_Task (params, latency, utility)} \\
    %     \bottomrule
    % \end{tabular}
    % \caption{The front-end interface of \textsc{OTAS}.}
    % \label{tabel_userface}
    % \end{table}

% \textbf{User Interface.}
% The system enables users to make a query and register tasks with two interfaces.
% The \textsf{Make\_Query} interface processes a query that comprises a sample and various attributes, such as the task ID, latency requirement and utility. 
% Then, the query can be assigned to a batch with Algorithm~\ref{alg_batch} and subsequently passed to the batch's ‘\textsf{add\_a\_query}’ method. 
% The \textsf{Register\_Task} interface saves the task parameters in the task model list and the corresponding latency and utility values in the task data list.

\section{Implementation}

    % \begin{table}[t]
    % \centering
    % \begin{tabular}{cc}
    %    \toprule
    %    \textbf{Data Structure} & \textbf{Interface}  \\
    %    \midrule
    %     \textsf{TransformerModel} & \textsf{forward (input, prompt, $\gamma$)} \\
    %     \textsf{TaskModel} & \textsf{get\_params (params)}  \\
    %     \textsf{ServeModel} & \textsf{forward (input, task\_list, taskmodel\_list, $\gamma$)}  \\
    %     \multirow{3}*{\textsf{Batch}} & \textsf{add\_a\_query (query)} \\
    %     & \textsf{get\_profile ($\gamma$)} \\
    %     & \textsf{get\_query4inference (curtime)} \\
    %     \bottomrule
    % \end{tabular}
    % \caption{The data structure of \textsc{OTAS} and related methods.}
    % \label{tabel_description}
    % \end{table}

\textbf{\textsc{OTAS} Description.}
We provide four data structures and corresponding interfaces to implement the \textsc{OTAS}. 
\textsf{TransformerModel} is a transformer model class that comprises token prompting and token reduction modules. This model is loaded with pre-trained weights. 
% The class can inherit from existing transformer libraries, such as the timm library~\cite{rw2019timm} and HuggingFace framework~\cite{wolf-etal-2020-transformers}.
\textsf{TaskModel} stores all parameters for a task, such as the prompts and classification head.
\textsf{ServeModel} serves as the base model for the front-end surface. Its \textsf{forward} method accepts a batch of inputs, the corresponding input tasks, the parameter list of \textsf{TaskModel} and the $\gamma$ value as input and returns the inference result.
The \textsf{Batch} class is responsible for adding a query to the batch, providing profiling results and returning a batch of queries within latency constraints for inference.

\textbf{Implementation Tools.}
We implement the \textsc{OTAS} based on the PetS~\cite{280684}.
We use Python to process the incoming queries and implement the batching and token adaptation algorithms.
We use PyTorch to define the neural networks, including \textsf{TransformerModel}, \textsf{TaskModel} and \textsf{ServeModel}.
We build the transformer model with timm library~\cite{rw2019timm} and insert two modules to add and remove the processing tokens at each layer.
We implement the prompt learning and token reduction methods according to VPT~\cite{jia2022visual} and ToMe~\cite{bolya2022tome}.

\textbf{User Interface.}
The system enables users to make a query and register tasks with two interfaces.
The \textsf{Make\_Query} interface processes a query that comprises an image sample and various attributes, such as the task ID, latency requirement and utility. 
Then, the query can be assigned to a batch with Algorithm~\ref{alg_batch}. 
The \textsf{Register\_Task} interface saves the task parameters in the task model list and the corresponding latency and utility values in the task data list.

\section{Experiment}

\textbf{Setup.}
We use the ViT-Base model pre-trained on ImageNet 21K as the foundation model, which contains 12 transformer layers. The head number of attention is 12, and the feature dimension is 768. The patch size of the images is $16\times16$. We use three datasets, including CIFAR10, CIFAR100~\cite{krizhevsky2009learning} and EuroSAT~\cite{helber2019eurosat}, and 1/5 of the training data was randomly selected as the profiling set.
We define the $\gamma$ selection list as \{-20, -15, -10, -5, 0, 2, 4, 8\} and adjust it according to the query rate. The values of $\delta$, $\epsilon$, $\eta$ and $\mu$ in Algorithm~\ref{alg_batch} are set as 0.5s, 64, 0.5s and 0.8 respectively. We set the value of $\beta$ as 5 and define the initial\_stage as the first 2 seconds of the service. The value of $\kappa$ in Algorithm~\ref{alg_allocate_manual} is 0.8.
According to Fig.~\ref{fig:pre_two_throughput}, the $f$ function is defined in Table~\ref{tabel_functionf}.

    \begin{table}[t]
    \centering
    \begin{tabular}{ccccc}
       \toprule
       $\gamma$ & 8 & 4 & 2 & 0  \\
       \midrule
       $q$ (Req/s) & 1$\sim$279 & 280$\sim$319 & 320$\sim$348 & 350$\sim$379  \\
       \bottomrule
    \end{tabular}
    \begin{tabular}{ccccc}
       \toprule
       $\gamma$ & -5 & -10 & -15 & -20 \\
       \midrule
       $q$ (Req/s) & 380$\sim$449 & 450$\sim$519 & 520$\sim$999 & $\textgreater$1000 \\
       \bottomrule
    \end{tabular}
    \caption{The projection function from arriving rate to $\gamma$.}
    \label{tabel_functionf}
    \end{table}

We train the prompts for tasks offline. The training batch size, epochs and learning rate are set as 32, 50 and 0.002.

We evaluate \textsc{OTAS} on an NVIDIA GeForce RTX 4080 (12th Gen Intel(R) Core(TM) i9-12900K CPU) machine. 

\textbf{Baseline.}
We compare \textsc{OTAS} with PetS~\cite{280684} and INFaaS~\cite{romero2021infaas}. 
% , ToMe~\cite{bolya2022tome} and VPT~\cite{jia2022visual}
PetS is a unified framework for serving transformers with parameter-efficient methods and optimizes task-specific and task-shared operators. We remain token unchanged for PetS and perform inference with a shared foundation model and task-specific heads. 
INFaaS is a model adaptation method that selects an appropriate model according to the query load. We set the candidate model list as ViT-Small, ViT-Base and ViT-Large.
We also compare \textsc{OTAS} with ToMe~\cite{bolya2022tome} and VPT~\cite{jia2022visual} that uses fixed merging or prompting number.
% ToMe is an inference acceleration technique based on token merging, and we experiment with merging numbers of {-20, -15, -10, -5}. VPT is a visual prompt learning method used for finetuning pre-trained models, and we conduct experiments with prompt numbers of {2, 4, 8}. 
%We implement all evaluated policies in our \textsc{OTAS} prototype and use a PyTorch-based runtime to ensure fairness.

    \begin{table}[t]
    \centering
    \begin{tabular}{cccc}
       \toprule
       Query Type & Task & Latency & Utility  \\
       \midrule
       1 & CIFAR10 & 0.6s & 0.3  \\
       2 & CIFAR10 & 1s & 0.01  \\
       3 & CIFAR100 & 0.6s & 1  \\
       4 & CIFAR100 & 1s & 0.2  \\
       5 & EuroSAT & 0.6s & 0.3  \\
       6 & EuroSAT & 1s & 0.01  \\
       \bottomrule
    \end{tabular}
    \caption{The latency and utility of queries.}
    \label{tabel_querytype}
    \end{table}
    \begin{figure}[t]
    \centering
	\subfloat[The query trace on the synthetic dataset in the first 200 seconds.]
        {\includegraphics[width = 0.23\textwidth]{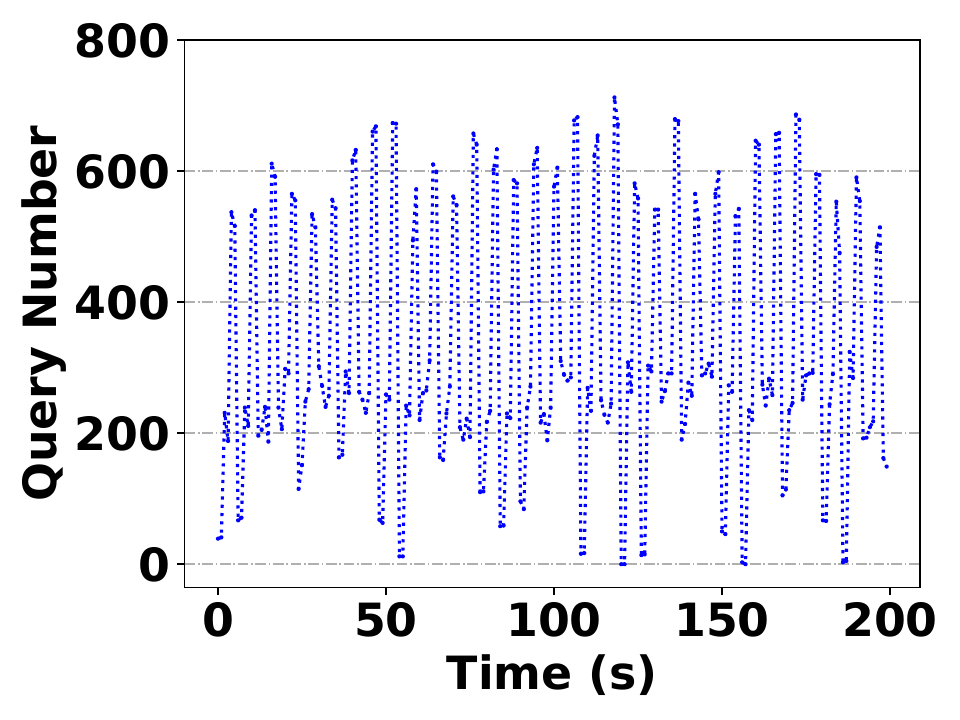}\label{fig:period_query_trace}}
	\hfill
	\subfloat[The query trace on the MAF dataset in the first 1000 seconds.]
        {\includegraphics[width = 0.23\textwidth]{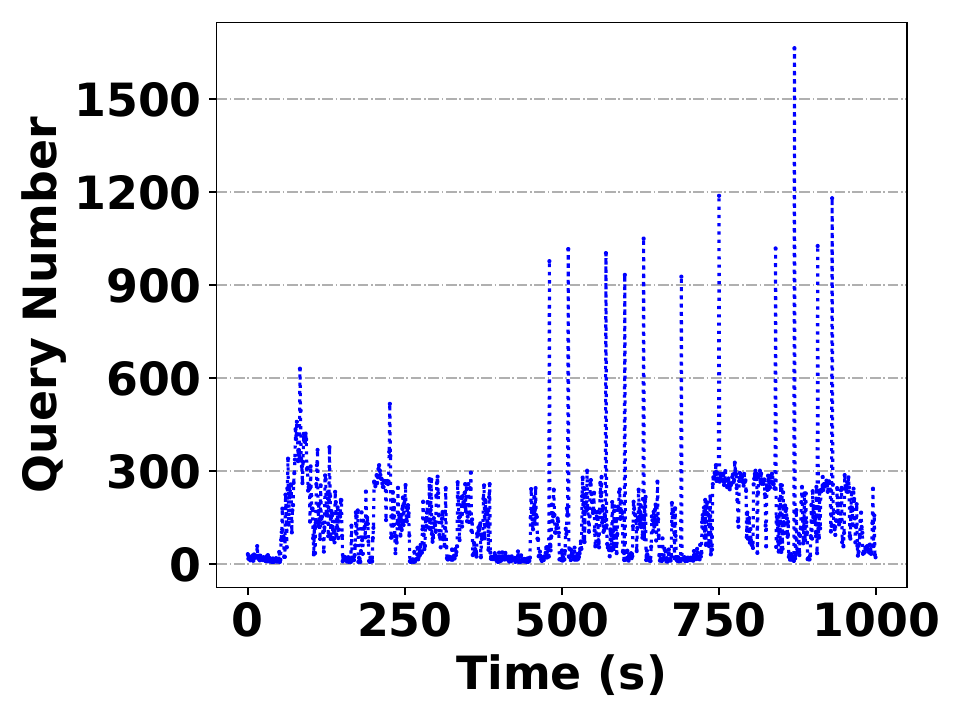}\label{fig:azure_query_trace}}
 %        \hfill
	% \subfloat[The CDF plot of query number per second on the MAF dataset.]
 %        {\includegraphics[width = 0.23\textwidth]{}\label{fig:azure_query_trace_cdf}}
    \caption{The query trace on two datasets.}
    \label{fig:query_trace}
    \end{figure}
    % \begin{figure}[t]
    %   \centering
    %   \includegraphics[width=0.4\linewidth]{figures/azure_query_trace_cdf.pdf}
    %   \caption{The CDF plot of query number per second on the MAF dataset.}
    %   \label{fig:azure_query_trace_cdf}
    % \end{figure}

\textbf{Workloads.}
We evaluate the algorithms using both the synthetic query trace and real-world production trace.
For synthetic workloads, we generate the query traces that have fluctuating loads. We randomly generate the arrival time for queries according to the Poisson distribution~\cite{romero2021infaas}. We randomly select a query type from Table~\ref{tabel_querytype} for each query. We conduct experiments over a 30-minute serving period, with more than 63k queries processed. In Fig.~\ref{fig:period_query_trace}, we present the query number per second during the first 200 seconds. The query rate varies between 200 Req/s to 700 Req/s in each second.

For real-world workloads, we use the publicly-released traces of Microsoft collected from Azure Functions in 2021 (MAF)~\cite{10.1145/3477132.3483580,285173,li2023alpaserve}. We select a 120-hour trace for experiments. 
We aggregate requests collected every two-minute interval into one-second interval to create a challenging trace. The query number per second in the first 1000 seconds is presented in Fig.~\ref{fig:azure_query_trace}.
During more than 60\% of the serving period, the query rate remains below 300 Req/s. There are still some instances where the request number per second exceeds 600 Req/s.

\subsection{Main Results}

    \begin{figure}[t]
    \centering
	\subfloat[The utility of different system designs on the synthetic dataset.]
        {\includegraphics[width = 0.23\textwidth]{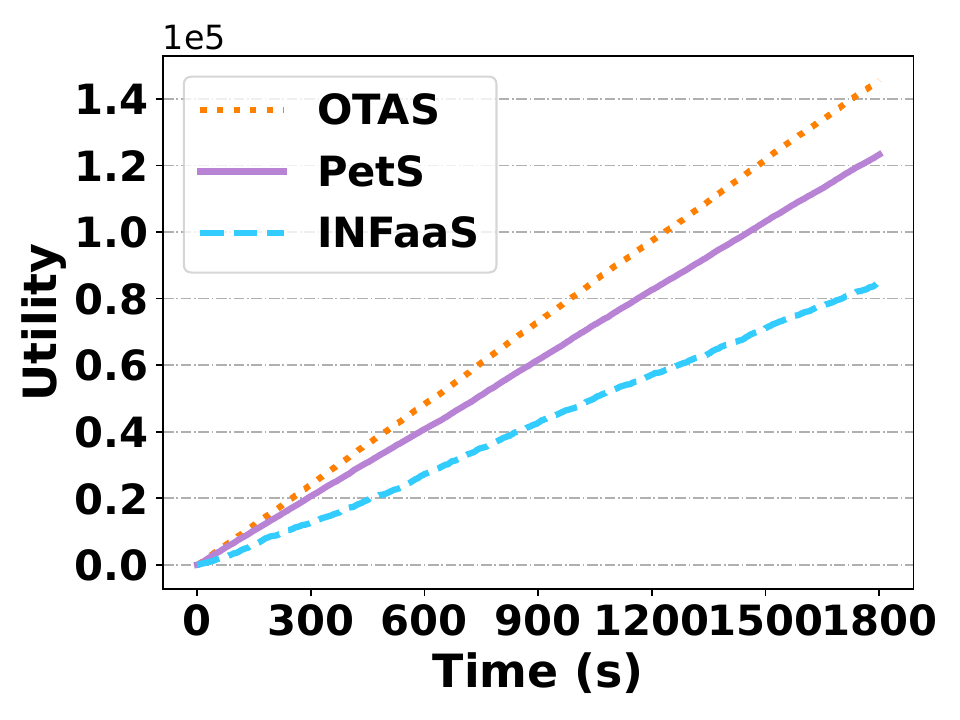}\label{fig:period_cumu}}
	\hfill
	\subfloat[The utility of different system designs on the MAF dataset.]
        {\includegraphics[width = 0.23\textwidth]{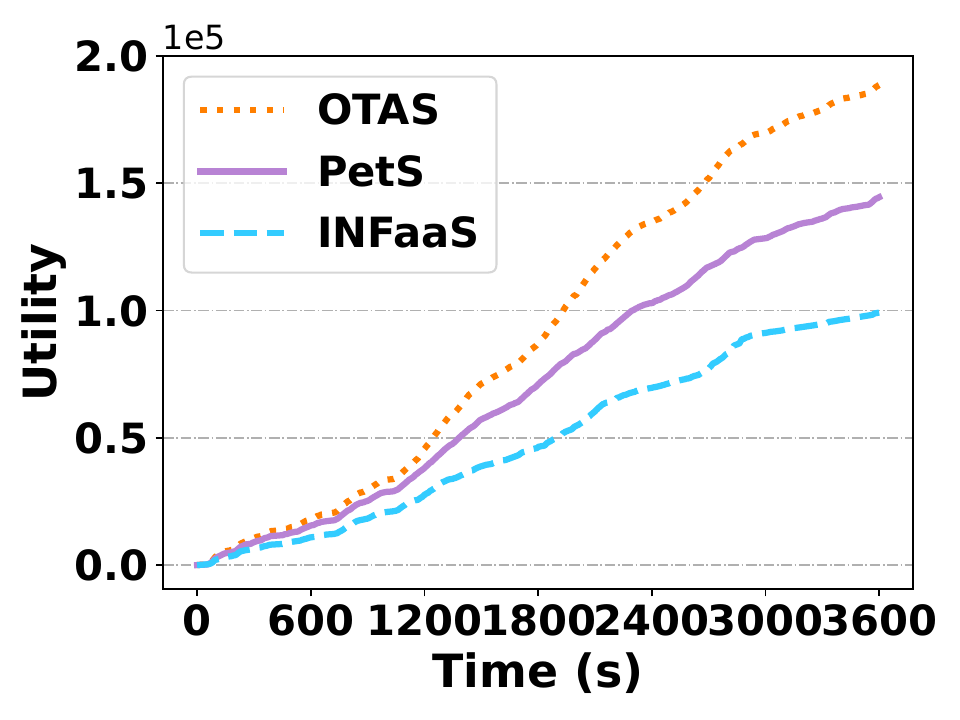}\label{fig:azure_cumu}}
    \caption{The utility comparison of different system designs.}
    \label{fig:utility_cumu}
    \end{figure}
    
    \begin{figure}[t]
    \centering
	\subfloat[The utility comparison of different methods on the synthetic dataset.]
        {\includegraphics[width = 0.23\textwidth]{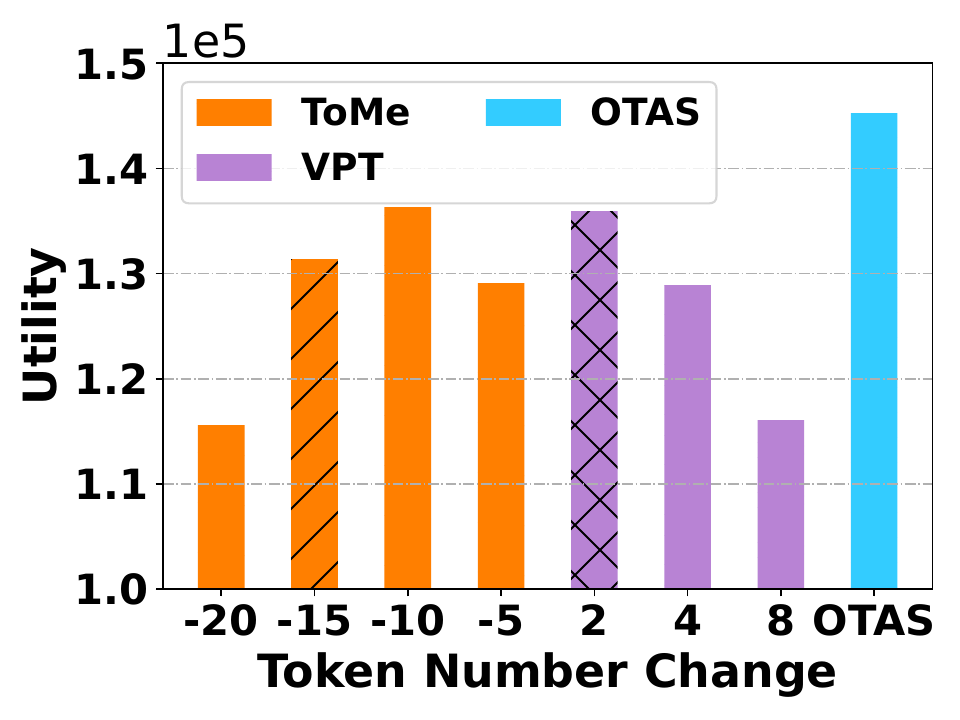}\label{fig:utility_bar}}
	\hfill
	\subfloat[The utility comparison of different methods on the MAF dataset.]
        {\includegraphics[width = 0.23\textwidth]{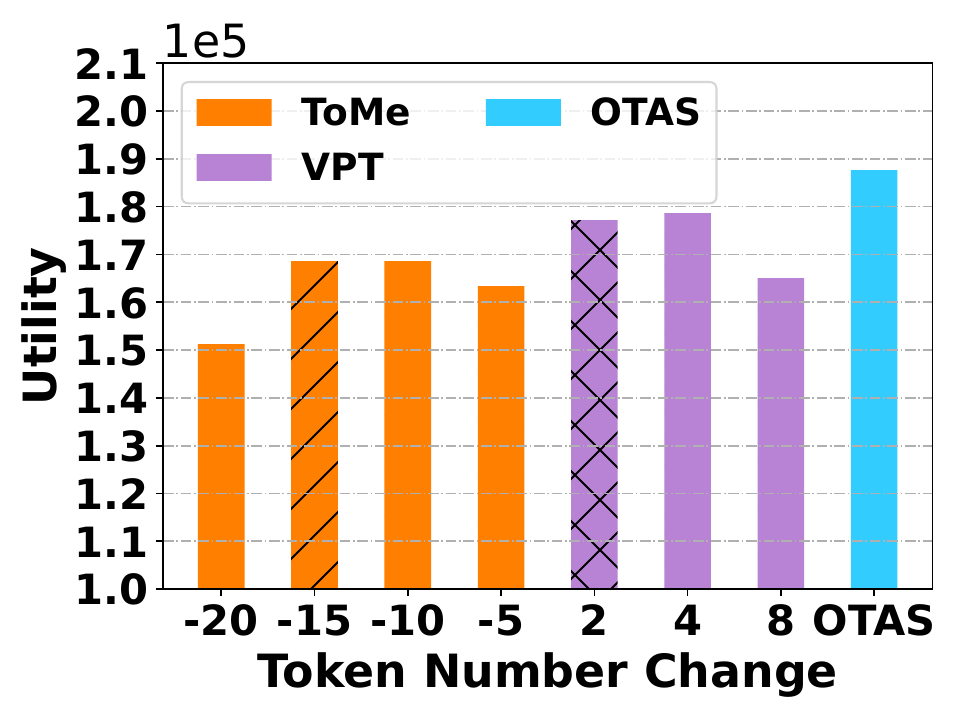}\label{fig:azure_utility_bar}}
    \caption{The utility comparison of different token number.}
    \label{fig:utility_bar}
    \end{figure}
    
    % \begin{figure}[t]
    %   \centering
    %   \includegraphics[width=0.6\linewidth]{figures/utility_bar.pdf}
    %   \caption{The utility comparison of different methods on the synthetic dataset.}
    %   \label{fig:utility_bar}
    % \end{figure}

    % \begin{figure}[t]
    %   \centering
    %   \includegraphics[width=0.6\linewidth]{figures/azure_utility_bar.pdf}
    %   \caption{The utility comparison of different methods on the MAF dataset.}
    %   \label{fig:azure_utility_bar}
    % \end{figure}

    \textbf{The overall utility.} If the system can return an accurate result for a query under the latency constraint, it can be rewarded the utility of the query.
    The accumulated utilities of three system designs on the synthetic dataset are shown in Fig.~\ref{fig:period_cumu}. 
    \textsc{OTAS} obtains about $1.46\times 10^{5}$ utilities and results in a utility improvement of 18.2\% and 72.5\%. INFaaS behaves the worst because it has a long I/O latency to switch the models. The overall utility of the MAF dataset is shown in Fig.~\ref{fig:azure_cumu}. \textsc{OTAS} can improve the utility by up to $90.1\%$. 

    The utility comparison with fixed token number is shown in Fig.~\ref{fig:utility_bar}. \textsc{OTAS} outperforms both ToMe and VPT because it can adjust the token strategy according to the query load.
    % The ToMe method with a merging number of -10 and the VPT method with a prompting number of 2 achieve the second-highest performance, owing to their ability to accelerate inference or improve accuracy.
    % The ToMe method with a merging number of -20 and the VPT method with a prompting number of 8 exhibit the lowest utility, due to their poor prediction accuracy or high inference latency.
    % The utility of PetS is $1.2\times 10^{5}$ that is smaller than ours. It is worth emphasizing that small improvements in utility can lead to significant benefits for both companies and individuals.

    % Given that the majority of the request load is below 300 Req/s, the VPT method with token prompting has relatively good performance.

    \begin{figure}[t]
    \centering
	\subfloat[The CDF plot of accuracies on the synthetic dataset.]
        {\includegraphics[width = 0.23\textwidth]{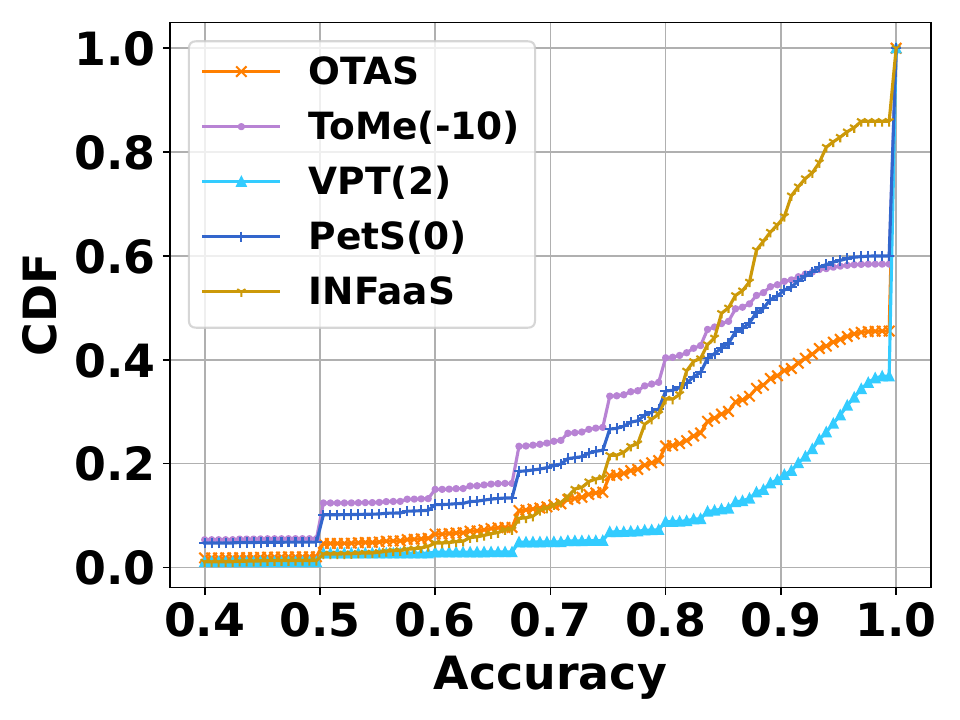}\label{fig:batch_acc_cdf_synthetic}}
	\hfill
	\subfloat[The CDF plot of accuracies on the MAF dataset.]
        {\includegraphics[width = 0.23\textwidth]{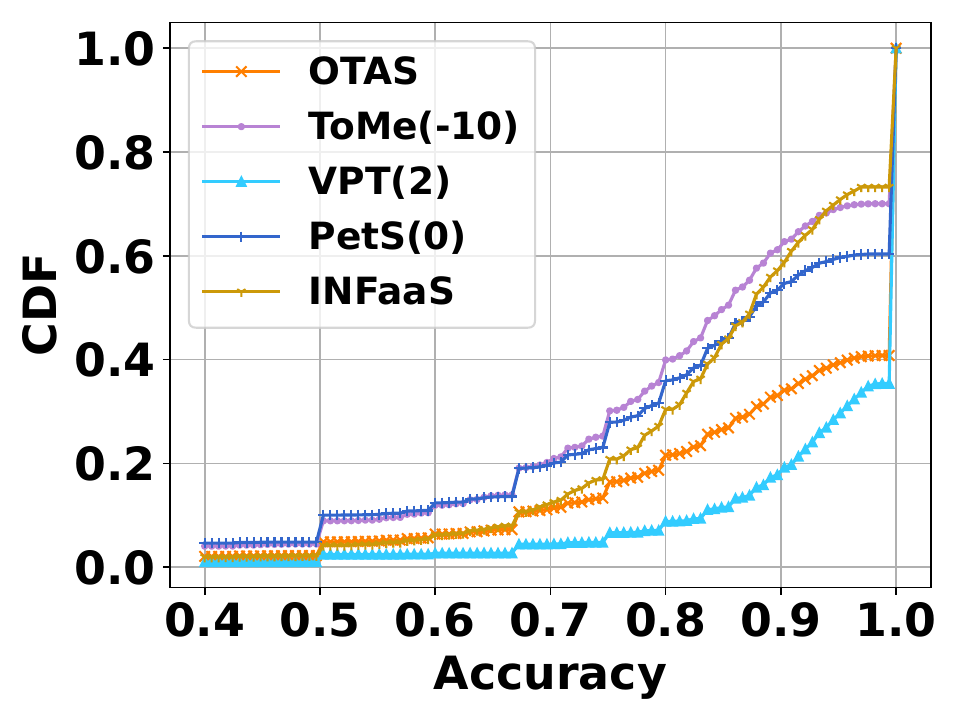}\label{fig:batch_acc_cdf_azure}}
    \caption{The CDF plot of accuracies for served batches.}
    \label{fig:batch_acc_cdf}
    \end{figure}

    \textbf{The accuracies of batches.} We present the CDF plot of accuracies for served batches with five methods on the synthetic dataset in Fig.~\ref{fig:batch_acc_cdf_synthetic}. 
    The VPT method with a prompting number of 2 achieves the highest accuracy because of the incorporation of well-trained prompting tokens. 
    The ToMe method exhibits relatively low accuracy, owing to the reduction of tokens. 
    \textsc{OTAS} can select an appropriate execution scheme dynamically, thereby achieving a balance between accuracy and latency. 
    The average accuracy of our method is larger than $90\%$, indicating that our approach can successfully provide accurate results for served queries. Though INFaaS achieves high accuracy with a stronger model, it comes at the cost of increased I/O overhead.
    As shown in Fig.~\ref{fig:batch_acc_cdf_azure}, the accuracy on the MAF dataset is similar to that observed on the synthetic dataset. 
    The accuracy curve exhibits a sudden increase as it approaches 1, primarily due to the large number of batches with a perfect accuracy score of 1.
    
 %    \begin{figure}[t]
 %    \centering
	% \subfloat[The CDF plot of the batch size on the synthetic dataset.]
 %        {\includegraphics[width = 0.23\textwidth]{}\label{fig:batch_size_cdf_synthetic}}
	% \hfill
	% \subfloat[The CDF plot of the batch size on the MAF dataset.]
 %        {\includegraphics[width = 0.23\textwidth]{}\label{fig:batch_size_cdf_azure}}
 %    \caption{The CDF plot of the batch size.}
 %    \label{fig:batch_size_cdf}
 %    \end{figure}

 %    \textbf{The batch size.} Batching can improve the throughput, and the batch size has a large influence on the execution efficiency. We visualize the batch sizes of different methods in Fig.~\ref{fig:batch_size_cdf}. The ToMe method has the smallest batch size. It is because the batches can be executed very quickly with token merging and the queries have a small waiting time to form a batch. \textsc{OTAS} has a moderate batch size and the average value is about 10. In contrast, VPT and Pets methods have a large batch size, and grouping too many queries has a high risk of violating the latency constraint. 

    \begin{figure}[t]
    \centering
	\subfloat[The $\gamma$ selection of \textsc{OTAS} on the synthetic dataset.]
        {\includegraphics[width = 0.23\textwidth]{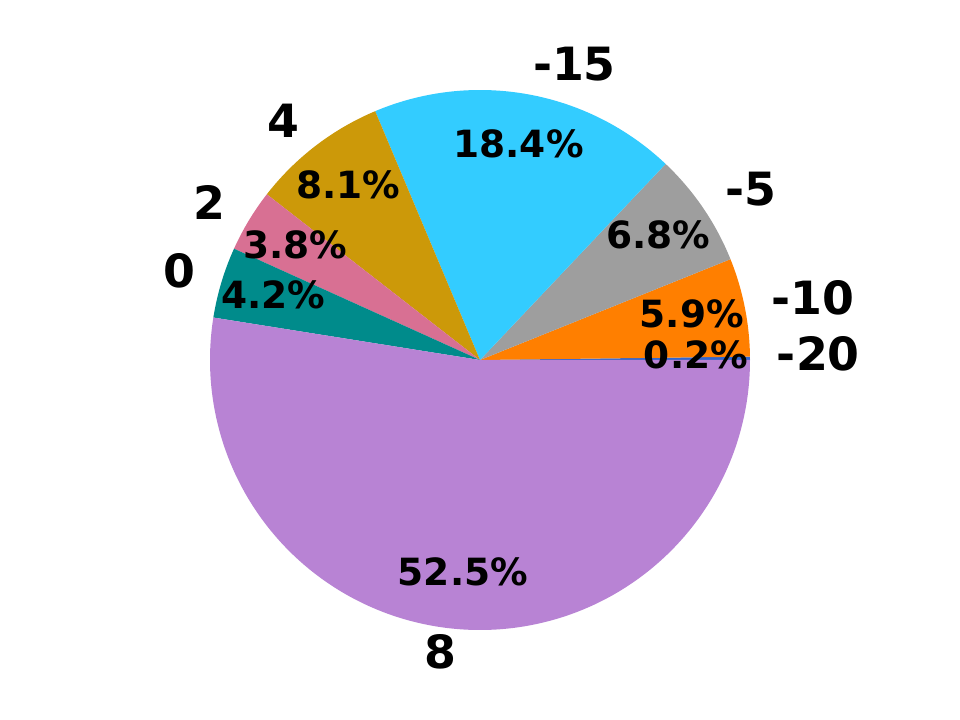}\label{fig:gamma_selection_pie_synthetic}}
	\hfill
	\subfloat[The $\gamma$ selection of \textsc{OTAS} on the MAF dataset.]
        {\includegraphics[width = 0.23\textwidth]{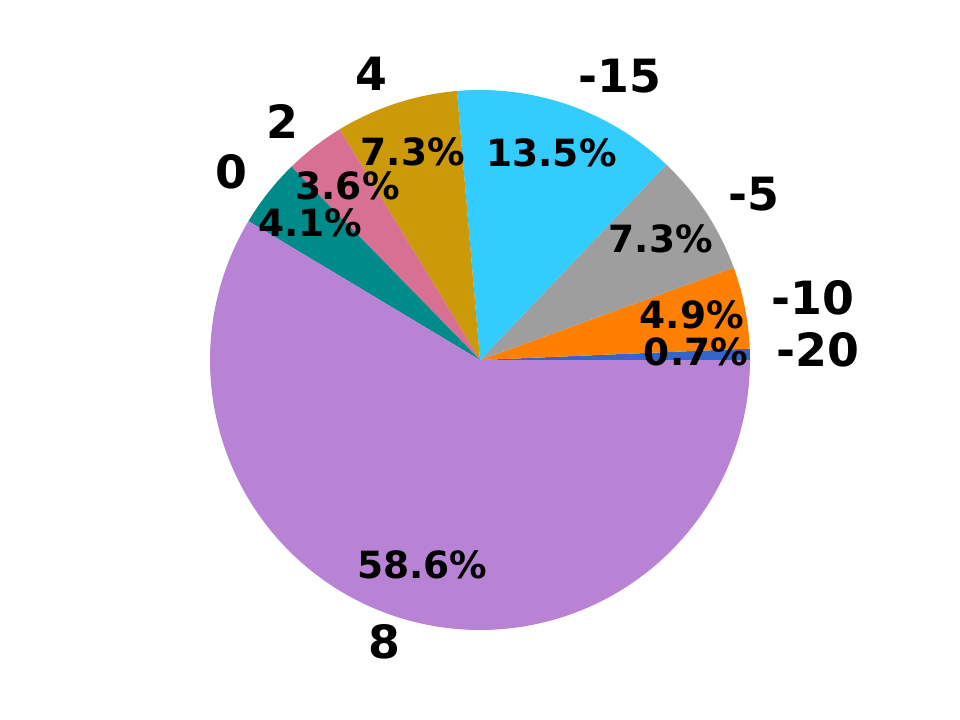}\label{fig:gamma_selection_pie_azure}}
    \caption{The $\gamma$ selection of \textsc{OTAS}.}
    \label{fig:gamma_selection_pie}
    \end{figure}
    
    % \begin{figure}[t]
    %   \centering
    %   \includegraphics[width=0.5\linewidth]{figures/gamma_selection_pie.pdf}
    %   \caption{The $\gamma$ selection of \textsc{OTAS}.}
    %   \label{fig:gamma_selection_pie}
    % \end{figure}

    \textbf{The $\gamma$ selection.} \textsc{OTAS} can change the token number $\gamma$ according to the incoming load and the query characteristics. We present the $\gamma$ selection ratio in Fig.~\ref{fig:gamma_selection_pie}. 
    On the synthetic trace, \textsc{OTAS} selects $\gamma=8$ at most, given the flat request load for most of the serving period.
    Another major selection is $\gamma=-15$ because it can reduce the inference time while keeping the accuracy nearly unchanged.
    On the MAF trace, more batches executed with a prompting number of 8 because of the light query loads. During busy periods, \textsc{OTAS} selects a $\gamma$ value of -15 to serve more queries.

     \begin{figure}[t]
    \centering
	\subfloat[Synthetic dataset.]
        {\includegraphics[width = 0.23\textwidth]{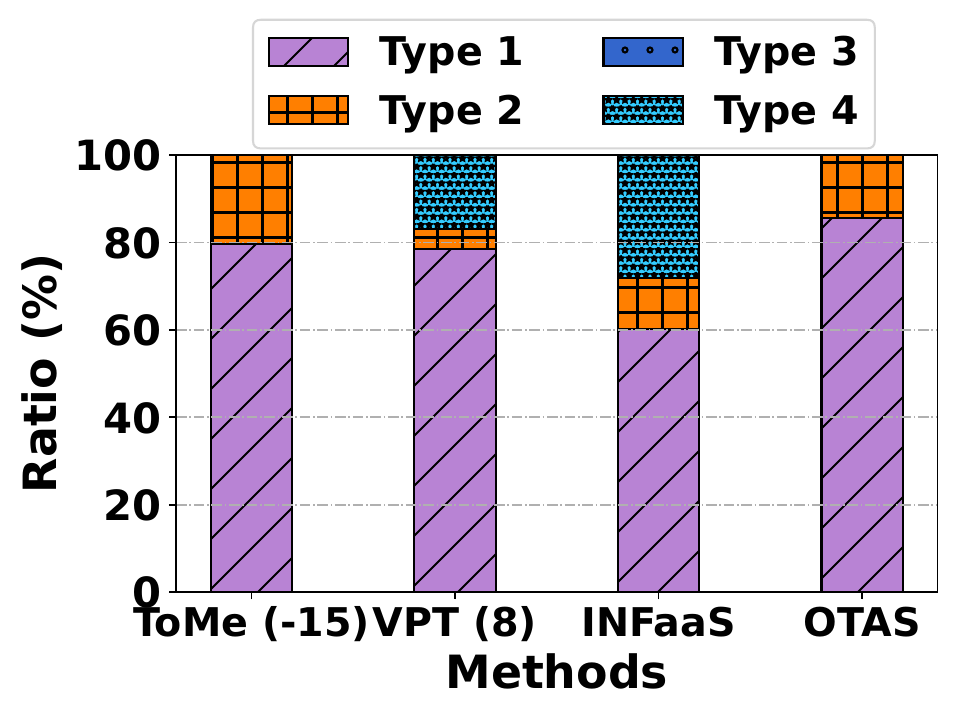}\label{fig:batch_type_synthetic}}
	\hfill
	\subfloat[Azure dataset.]
        {\includegraphics[width = 0.23\textwidth]{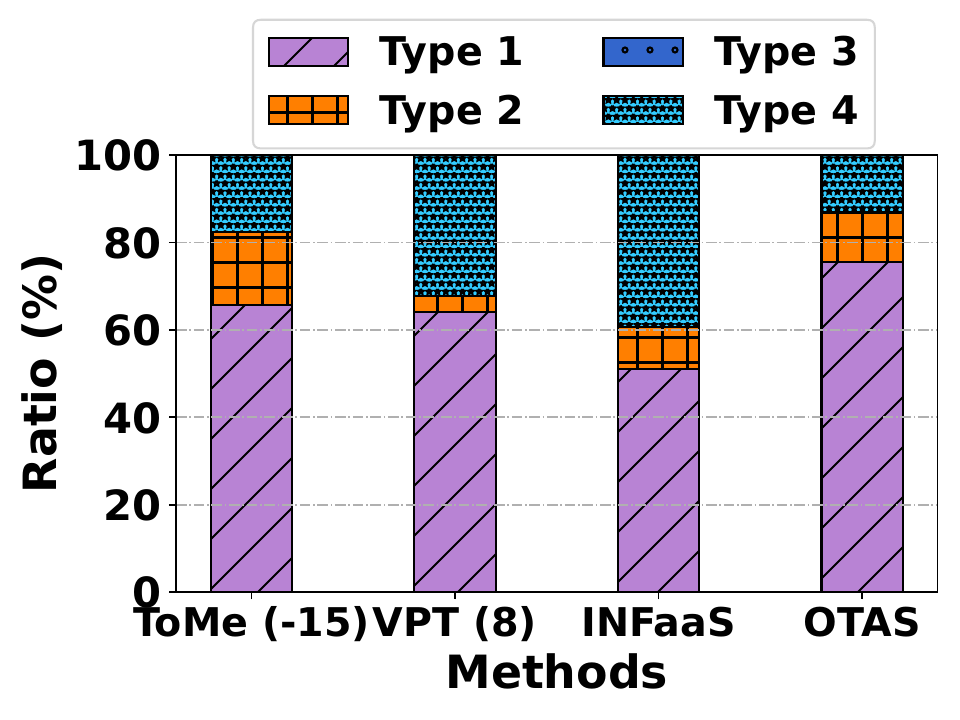}\label{fig:batch_type_azure}}
        \hfill
    \caption{The ratio of execution information of different queries.}
    \label{fig:batch_type}
    \end{figure}
    
 %    \begin{figure}[t]
 %    \centering
	% \subfloat[Query type of \textsc{OTAS}.]
 %        {\includegraphics[width = 0.23\textwidth]{}\label{fig:batch_type_adaptive}}
	% \hfill
	% \subfloat[Query type of ToMe (-15).]
 %        {\includegraphics[width = 0.23\textwidth]{}\label{fig:batch_type_neg15}}
 %        \hfill
	% \subfloat[Query type of VPT (8).]
 %        {\includegraphics[width = 0.23\textwidth]{}\label{fig:batch_type_8}}
 %    \caption{The ratio of execution information of different queries on the synthetic dataset.}
 %    \label{fig:batch_type}
 %    \end{figure}

 %    \begin{figure}[t]
 %    \centering
	% \subfloat[Query type of \textsc{OTAS}.]
 %        {\includegraphics[width = 0.23\textwidth]{}\label{fig:azure_batch_type_adaptive}}
	% \hfill
	% \subfloat[Query type of ToMe (-15).]
 %        {\includegraphics[width = 0.23\textwidth]{}\label{fig:azure_batch_type_neg15}}
 %        \hfill
	% \subfloat[Query type of VPT (8).]
 %        {\includegraphics[width = 0.23\textwidth]{}\label{fig:azure_batch_type_8}}
 %    \caption{The ratio of execution information of different queries on the MAF dataset.}
 %    \label{fig:azure_batch_type}
 %    \end{figure}

    \textbf{The execution type of a query.} Queries have different processing outcomes, which can be classified into the following categories. Type 1 - obtaining accurate results and meeting latency constraints; Type 2 - obtaining incorrect results while still meeting latency constraints; Type 3 - obtaining inference results while unable to meet latency deadlines; and Type 4 - queries that cannot meet latency constraints before actual execution and have been evicted.
    The execution ratio of different query types on the synthetic dataset is visualized in Fig.~\ref{fig:batch_type_synthetic}. It can be observed that \textsc{OTAS} is able to successfully serve 85.54\% of the queries (Type 1), and all queries can meet the latency requirement. 
    On the other hand, ToMe can serve fewer queries because it has a low prediction accuracy. VPT and INFaaS has a longer inference time that leads to a higher eviction ratio. 
    %Therefore, our method can achieve an excellent trade-off between accuracy and latency that leads to a higher serving rate.
    The ratio of different query types on the MAF dataset is presented in Fig.~\ref{fig:batch_type_azure}. 
    Because there are some highly bursty loads in the MAF dataset, the ratio of evicted queries (Type 4) increases due to the limited computational resources. 
    Compared to other methods, \textsc{OTAS} serves the highest number of requests, with a success rate of 75.58\%.
    For the ToMe method, 16.85\% of requests are mispredicted, which still consume computational resources.
    The success ratio of VPT is only 64.14\% due to the high inference latency. 
    %As a result, 32.16\% of requests cannot meet the designated deadline and are evicted.
    Our method is more flexible to deal with the bursty query loads.

\textbf{Discussion}
\textsc{OTAS} can be generalized to different tasks and execution environments: 
(1) Choose a pre-trained transformer model as the foundation model. (2) Investigate the prompt learning method, train the prompt pool, and design the token reduction algorithm. (3) Profile the accuracy and inference latency and determine the $\gamma$ list. (4) Apply the profiling data to Algorithm~\ref{alg_allocate} for adaptively selecting a $\gamma$ value.

% For different execution environments, we needs to profile the accuracy and throughput on the target devices with corresponding resource scheduling algorithms and choose a suitable $\gamma$ list. Then, the Algorithm~\ref{alg_allocate} can allocate a value of $\gamma$ for the serving batches. The problem formulation of Eq.~\eqref{equa:optim} can be modified to consider other objectives, such as energy consumption in the edge environment. The calculation of the utility in Algorithm~\ref{alg_allocate} can be slightly modified for a new target.

% Our method is agnostic to specific prompt learning or token reduction methods. An advanced method can be easily plugged into \textsc{OTAS}. 

\vspace{7pt}
\section{Related Works}

The optimization of serving system is a popular research area in academia and industry.
A number of works designed different optimization schemes to improve serving performance. (1) \emph{Batching}~\cite{crankshaw2017clipper,ali2020batch,cui2022dvabatch,280922,10.1109/INFOCOM48880.2022.9796853}. Grouping the queries together and executing them in a batch can make use of the computational capacity of the hardware and significantly improve the throughput. DVABatch designs a multi-entry multi-exit batching scheme~\cite{cui2022dvabatch}. ORCA proposes an iteration-level batching mechanism for generative models~\cite{280922}. Our batching strategy further considers the service-level characteristics for elastic adaptation.
(2) \emph{Model adaptation}~\cite{10.14778/3570690.3570692,wang2023tabi,10.1145/3575693.3575698,9817634}. To deal with fluctuating query loads, a line of serving systems deploys a hierarchy of models and selects a suitable model dynamically. However, model scaling is infeasible for large transformer models. Firstly, the service provider only deploys a foundation model for all users. Moreover, training different versions of large models consumes numerous computational resources, and switching different models requires high I/O costs. Compressing the model may negatively influence the emerging abilities of the vanilla pre-trained model.   
(3) \emph{Resource management and scheduling}~\cite{fasttransformer,ren2022edgematrix,li2023alpaserve,285173,9796884}. Another class of works focused on optimizing the model parallelism, model placement and resource management to improve the device usage. AlpaServe designs an efficient strategy for placing and parallelizing models~\cite{li2023alpaserve}.

The optimization strategy most relevant to us is model adaptation that achieves elastic serving through switching different model versions. Instead of relying on heavy model selection, we dynamically adjust the execution process through lightweight token adaptation. Besides, batching and dynamic resource provisioning are orthogonal to our method and can be easily incorporated into our framework.

% These strategies are not sufficient to support a resilient transformer serving system.
% Batching cannot inherently improve the maximum serving ability of the system because the inference process is fixed. Model adaptation has high training and I/O switching overheads. Dynamic resource provisioning requires complex hardware management and long startup latency. In contrast, our token adaptation can dynamically adjust the execution process with negligible overheads.

% The closest optimization strategy related to us is model adaptation, though it is not a good scheme for large transformer models.

\vspace{7pt}
\section{Conclusion}

We present \textsc{OTAS}, an elastic serving system for large transformers based on an innovative idea of token adaptation. We implement a prototype that supports dynamically allocating the token number for a batch and running the transformer model in flexible way. 
The results show that \textsc{OTAS} can improve the utility by at least 18.2\% on both simulated and real-world production traces from Azure compared with other state-of-the-art methods. The observed performance improvement is achieved because \textsc{OTAS} can identify an optimal balance between the overhead of token increment and the benefits of accuracy improvement based on the real-time query load and user demands. %Finally, we discuss how to extend the approach to various serving scenarios. 

\clearpage

\bibliographystyle{IEEEtran}
\bibliography{sample}

\end{document}